\documentclass{article}
\usepackage{graphicx} 
\usepackage[a4paper, total={6in, 8in}]{geometry}
\usepackage{amsmath, amsfonts, amsthm}
\usepackage{booktabs}
\usepackage{color}
\usepackage{xcolor}
\usepackage{subcaption}
\usepackage{float}
\usepackage{enumitem} 

\allowdisplaybreaks 
\def\R{\mathbb{R}}

\def\EBE{\mathrm{EBE}}

\def\B{\mathcal{B}}

\def\rE{\mathrm{E}}
\def\rB{\mathrm{B}}
\def\rS{\mathrm{S}}
\def\rT{\mathrm{T}}
\def\rR{\mathrm{R}}
\def\cV{\mathcal{V}} 

\newcommand{\diff}{\text{d}}

\newcommand*\der{\mathop{}\!\mathrm{d}}

\newtheorem{theorem}{Theorem}[section]
\newtheorem{lemma}[theorem]{Lemma}

\newtheorem{corollary}{Corollary}[section]

\newcommand{\s}{\mathfrak{s}}
\newcommand{\q}{\mathfrak{q}}
\numberwithin{equation}{section}
\newtheorem{assumption}{Assumption}

\providecommand{\keywords}[1]
{
  \small	
  \noindent\textbf{\textit{Key words. }} #1
}

\providecommand{\msc}[1]
{
  \small	
  \noindent\textbf{\textit{AMS subject classifications.}} #1
}
\usepackage{mathrsfs}
\title{Switching Hamiltonian Monte Carlo \\ for sampling from mixture distributions}
\author{A. Sharma}
\date{}

\begin{document}

\maketitle
\begin{abstract}
We introduce a  switching Hamiltonian Monte Carlo method for sampling from finite mixture Boltzmann-Gibbs distributions. We propose symmetric numerical integrators to approximate switching Hamiltonian dynamics interlaced with Poisson jumps, where the regime-switching chain is simulated using the uniformization technique or the stochastic simulation algorithm. We prove geometric ergodicity of the resulting Markov chain. We develop an approach based on the discrete Poisson equation associated with numerical schemes to estimate the error in computing ergodic averages. Using this approach we prove that the proposed numerical integrators have second-order bias. This approach is simple and can be generalized to other settings, for example, kinetic Langevin equations. Finally, we verify the convergence result via  numerical experiment.

\end{abstract}

\keywords{Stochastic differential equations with switching and jumps, hybrid systems, randomized Hamiltonian Monte Carlo, discrete Poisson equation, computing ergodic limits.
}

\msc{
65C30, 60H35, 60H10, 37H10}
\section{Introduction}

Sampling techniques are widely used to evaluate risk and measure unknowns, finding parameters within statistical frameworks, compute thermodynamic variables in physics simulations and designing new global optimization methods. On the other hand, regime-switching jump-diffusion stochastic differential equations (SDEs)  find a large number of applications in biology, insurance, control etc. (for more in-depth exposition see \cite{mao2006stochastic_switching,yin2009hybrid}). The underlying process in such hybrid systems has two components $(Y(t), \s(t))$ where $Y$ represents the jump-diffusion and $\s$ describes switching mechanism.  This paper utilizes the framework of switching stochastic dynamics for the purpose of sampling from mixture distributions as in \cite{tretyakov2025sampling}.  Mixture distributions play an important role in statistical modeling and Bayesian inference \cite{fruhwirth2006finite,bouguila2020mixture}.

 Arguably, discrete-time Hamiltonian Monte Carlo \cite{duane1987hybrid} and its variants \cite{bou_serna2018geometric,jansson2025rebalancing} are among the most widely used sampling methods.  In this paper, we take a continuous-time perspective on Hamiltonian Monte Carlo. We consider hybrid-Hamiltonian dynamics which includes a switching mechanism and is interlaced with randomized refreshments. We  write the resulting regime-switching jump dynamics as SDEs driven by Poisson random measure with added regime-switching mechanism in the drift term so that the target measure is a mixture Boltzmann-Gibbs distribution.  Then the contributions of this paper are as follows:
\begin{itemize}
\item[(i)] We propose new numerical integrators based on splitting technique to discretize switching Hamiltonian-type SDEs where continuous-time finite state space conditionally Markov chain is simulated via uniformization technique or Gillespie's algorithm. The proposed schemes improve the error incurred in estimating ergodic averages to $\mathcal{O}(h^2)$ from $\mathcal{O}(h)$ bias resulting from switching Langevin algorithm proposed in \cite{tretyakov2025sampling} (here $h$ represents the discretization step).

\item[(ii)] We present a new approach based on discrete Poisson equation associated with numerical scheme itself for obtaining error bounds in  integral probability metric between desired measure $\mu$ and the invariant measure $\mu_h$ of Markov chain governed by the rules of proposed numerical schemes, i.e.
\begin{align}
    \left|\int_{\mathbb{R}^d} \varphi \der \mu  - \int_{\mathbb{R}^d} \varphi \der \mu_h \right| \leq Ch^2, 
\end{align}
where $\varphi$ belongs to a class of test functions allowing for bounded measurable functions, Lipschitz functions etc., and $C$ is independent of $h$ and is dependent on the solution of a discrete Poisson equation. 
  
\end{itemize}

The other contributions which are of independent interest and serve as the building blocks of the above main results are as follows.  We first show the geometric ergodicity of continuous-time dynamics.   Writing the dynamics as SDEs allows us to use tools available in  stochastic numerics literature to develop numerical integrators for randomized Hamiltonian Monte Carlo dynamics. A lot of attention has been given in designing and analyzing numerical integrators for kinetic Langevin dynamics, for example \cite{milstein2003quasi,burrage2009accurate, stoltz2010free,bourabeeowhadi2010, abdulle2014ergo, leimkuhler2013rational, alamo2016technique, sanz2021wasserstein,monmarche2021high, leimkuhler2024contraction} but the same can not be said about Hamiltonian dynamics interlaced with Poisson refreshments for which we propose splitting schemes and also show their uniform in $h$ geometric ergodicity.  We now position our work within the literature on numerical integration of switching SDEs. Existing convergence results for numerical schemes for regime-switching processes cover several different settings. For switching diffusions without jumps, \cite{yin2010approximation} proved weak convergence of a numerical scheme without rate, \cite{shao2018invariant} established an optimal 1/2 rate in $L^1$ and \cite{song2006numerical_auto} investigated  continuous time Markov chain approximation scheme where the generator itself has to be discretized. For regime-switching jump diffusions, convergence of the Euler method is investigated in \cite{yin2005numerical} in $L^2$
 with decreasing step sizes. Here, we take splitting approach for numerical approximation which allows to use uniformization technique or Gillespie' methods to simulate continuous time chain governing switching mechanism.   

We denote with $(a\cdot b)$ scalar product between two vectors $a, b \in \mathbb{R}^d$ and with $|a|$ norm of $a \in \mathbb{R}^d$. We denote with $\mathcal{N}(0, I_d)$ $d$-dimensional standard Gaussian distribution and with $\mathcal{U}(a,b)$ the uniform distribution on $(a,b)$ with $a< b \in \mathbb{R}$. We represent $I_d$ and $\mathbf{0}_d$ the identity matrix and zero matrix, respectively. For a matrix $P \in \mathbb{R}^{n \times n}$, we denote its matrix norm as $\|P\|$.

\section{Switching Hamiltonian Monte Carlo}

Andersen’s thermostat \cite{andersen1980molecular} models a molecular system using Hamiltonian dynamics with random Poisson jump updates in the momentum variable of a randomly chosen particle. These random refreshments of randomly selected particles imitate collisions. In the same spirit, the authors of \cite{bou2017randomized} introduced randomized Hamiltonian Monte Carlo (RHMC), where Hamiltonian dynamics is interlaced with Poisson refreshments, to sample from the Boltzmann-Gibbs measure defined as
\begin{align}
\mu(\diff x, \diff v) \propto e^{-U(x) - |v|^2/2}  \der x \der v .   
\end{align} In RHMC, starting at time $t_0$, a random time $\delta t$ is drawn from an exponential distribution with mean $1/\lambda$, and the system evolves according to standard Hamiltonian dynamics until $t_1 = t_0 + \delta t$. At time $t_1$, the position remains unchanged, but the velocity is partially refreshed by replacing it with
$$
V(t_1) = \cos(\phi)V(t_1^-) + \sin(\phi)Z,
$$
where $\phi \in (0, \pi/2]$ determines the amount of randomness added and $Z \sim \mathcal{N}(0, I_d)$. We can write the RHMC dynamics in terms of SDEs governed by Poisson random measure 
as follows:
\begin{equation} \label{eqn_Poishmc}
\begin{aligned}
    \der X(t) & = V(t) \der t, \\
    \der V(t) & = -\nabla U(X(t)) \der t + \int_{\mathbb{R}^d}(- V(t^{-}) + \cos(\phi)V(t^{-}) + \sin(\phi) z) N(\der t ,\der z),
\end{aligned}
\end{equation}
where $N(\der t, \der z) $ denotes the Poisson random measure with intensity  $\lambda \nu( \der z)\der t$, $\nu( \der z) \propto e^{-|z|^2/2} \der z$. Under the assumptions of local one-sided Lipschitz continuity and super-linear growth of $-\nabla U$, there exists a unique strong solution of \eqref{eqn_Poishmc} (see \cite{gyongy1980stochastic}). As can be verified $\rho$ satisfies the corresponding Fokker-Planck partial integro-differential equation (PIDE):

\begin{align}
   &- (v \cdot \nabla_x \rho) + (\nabla U(x) \cdot \nabla_v \rho)  
          +    \frac{\lambda}{(2 \pi)^{d/2} \sin^d (\phi)} \int_{\mathbb{R}^d} \rho(x, z) e^{-\frac{|v - \cos(\phi) z|^2}{2 \sin^2(\phi)}}\der z - \lambda\rho(x,v)= 0.
\end{align}

Applications requiring sampling from mixture distributions and the computation of their ergodic averages arise in several statistical models and hybrid-control systems. For this purpose, we present below a switching Hamiltonian Monte Carlo method. Let $\mathcal{M}=\{1,\ldots ,m\}$ denote the finite set of regimes, let $U(x; m) > 0$, $x \in \mathbb{R}^d$, $m \in \mathcal{M}$, be sufficiently smooth functions and consider a given finite mixture distribution with sufficiently smooth density
\begin{equation}
\rho(x)  \propto \sum_{i=1}^{m} \rho(x; i),\;\;\text{
where} \;\;\rho(x; i) = w_i \exp(-U(x; i)),
\end{equation}
$w_i > 0$ are weights. With a slight abuse of notation which should not lead to any confusion, we introduce the mixture Boltzmann-Gibbs density as
\begin{align}
\rho(x,v)  = \frac{1}{\mathcal{Z}} \sum_{i=1}^{m} \rho(x, v; i),\;\; \text{where }\;\; \rho(x,v;i) = w_i \exp(-U(x; i) - |v|^2/2),
\end{align}
and $\mathcal{Z}$ is the normalization constant. 
Let us now introduce the transition-rate matrix governing the regime-switching mechanism.
\begin{equation}\label{swi_hmc_Qmatrix}
{Q}(x)=\left[
\begin{array}{cccc}
-\q_{1}(x) & \q_{12}(x) & \cdots & \q_{1m}(x) \\
\q_{21}(x) & - \q_{2}(x) & \cdots & \q_{2m}(x) \\
\vdots & \vdots & \cdots & \vdots \\
\q_{m1}(x) & \q_{m2}(x) & \cdots & - \q_{m}(x)%
\end{array}%
\right] ,\quad x\in \R^d,
\end{equation}%
where $\q_{i}(x)$ and $\q_{ij}(x)$ are continuous non-negative functions bounded in $\mathbb{R}^{d}$ and the following relation holds:
\begin{equation}
\sum_{j\neq i}\q_{ij}(x)=\q_{i}(x),\ \ i\in \mathcal{M}.  \label{hmc_switching_eqn_q}
\end{equation}
Therefore, the regime-switching jump SDEs governing switching Hamiltonian Monte Carlo dynamics are  
\begin{equation} \label{eqn_switching_Poishmc}
\begin{aligned}
    \der X(t) & = V(t) \der t \\
    \der V(t) & = -\nabla U(X(t); \mathfrak{s}(t)) \der t + \int_{\mathbb{R}^d}(- V(t^{-}) + \cos(\phi)V(t^{-}) + \sin(\phi) z) N(\der t ,\der z),
\end{aligned}
\end{equation}
where $N(\der t, \der z) $ is as in \eqref{eqn_Poishmc} and $\s (t)$ is a
continuous-time finite state conditionally Markov process with right-continuous sample
paths taking values in $\mathcal{M}$ and
\begin{align}
\mathbb{P}(\s (t+\delta ) &= j\mid\s (t)= i ,X(t)=x)=\q_{ij}(x)\delta +o(\delta ),\ i\neq 
 j,  \label{eq:sde2} \\
\mathbb{P}(\s (t+\delta ) &= i\mid \s (t)=i,X(t)=x)=1-\q_{i}(x)\delta +o(\delta )  \notag
\end{align}%
as $\delta \downarrow 0.$ The Poisson random measure $N(\der t, \der z)$ and the
continuous-time conditional Markov chain governing the transitions of $\s (t)$ are independent. Here, the corresponding Fokker-Planck equation with adjoint operator $\mathcal{A}^*$ is
\begin{equation*}
\begin{aligned}
\mathcal{A}^* \rho(x,v; \s) = &  -(v \cdot\nabla_x \rho(x,v;\s)) + (\nabla U(x; \s)\cdot\nabla_v \rho(x,v; \s) )  \\   &  +      \frac{\lambda}{(2 \pi)^{d/2} \sin^d (\phi)} \int_{\mathbb{R}^d} \rho(x, z;\s) e^{-\frac{|v - \cos(\phi) z|^2}{2 \sin^2(\phi)}}\der z - \lambda\rho(x,v;\s) \\
& - \q_\s(x) \rho(x,v; \s) + \sum_{j \neq \s} \q_{j\s}(x) \rho(x, v; j). 
\end{aligned}
\end{equation*}

To ensure that \eqref{hmc_switching_eqn_q} leaves the desired mixture distribution invariant, i.e. $\mathcal{A}^*\rho = 0$ we require the following relation to hold:
\begin{equation*}
-\q_\s(x) \rho(x, v; \s) + \sum_{j \neq \s} \q_{j\s}(x) \rho(x ,v; j) = 0, \quad \s \in \mathcal{M},
\end{equation*}
or equivalently $\sum_{j \neq \s} [\q_{j\s}(x) \rho(x, v; j) - \q_{\s j}(x) \rho(x, v; \s)] = 0$,  $\s \in \mathcal{M}$. One way it is satisfied if for $j \neq \s$
\begin{equation}
\frac{\q_{j\s}(x)}{\q_{\s j}(x)} = \frac{\rho(x , v; \s)}{\rho(x,v; j)}. \label{swi_eqn_2.9}
\end{equation}

 \begin{assumption}\label{assump_1_well_posed}
     $U \in C^{2}(\mathbb{R}^d)$ for each $k \in \mathcal{M}$. There exist constants $L, L_q >0$ such that
     \begin{align}
         |\nabla U(x; k) - \nabla U(y; k)| &\leq L|x-y|, \\
      |\q_{kl}(x)- \q_{kl}(y)| &\leq L_{q}|x-y|,
  \end{align}
  for all $x, y \in \mathbb{R}^d$ and $k\neq l \in \mathcal{M}$. Moreover, $\q_{kl}$ are bounded in $\mathbb{R}^d$.
 \end{assumption}
\begin{assumption}\label{assump_2_ergo}
    There exist constants $\kappa >0$ and $\ell \in \R$ such that the following holds:
    \begin{align}
        -(x \cdot \nabla U(x; i)) \leq - \kappa |x|^2 + \ell,
    \end{align}
    for all $x \in \mathbb{R}^d$ and $i \in \mathcal{M}$. Moreover, we assume $\q_{ij}(x) >0$ for all $i,j \in \mathcal{M}$ with $i \neq j$ and $x \in \mathbb{R}^d$. 
\end{assumption} 

\paragraph{Geometric ergodicity.} Under Assumption~\ref{assump_1_well_posed}, the regime-switching dynamics with jumps \eqref{eqn_switching_Poishmc} is well-posed, i.e. there exists a unique strong solution \cite{fubaoxi2009asymptotic, fubaoxi2011jump}. To establish geometric ergodicity via Meyn-Tweedie approach, we need two ingredients: the Lyapunov drift condition and the minorization condition. The Lyapunov drift condition follows from \cite{bou2017randomized} under Assumption~\ref{assump_2_ergo}. However, to prove minorization condition, diffusion matrix is explicitly used in \cite[Sections~5-6]{fubaoxi2009asymptotic} (also see \cite{fubaoxi2011jump, xi2017feller}). Likewise ellipticity or local ellipticity of diffusion matrix is used in \cite{chen2019properties}. Therefore, we can not rely on these quoted results for geometric ergodicity of \eqref{eqn_switching_Poishmc}. However, \eqref{eqn_switching_Poishmc}, for a fixed switch $\s(t) =k$ for all $t$, satisfies minorization condition \cite[Proposition~3.7]{bou2017randomized}. We will use this fact to show minorization condition for \eqref{eqn_switching_Poishmc} and thus establishing its geometric ergodicity. 

Let us consider \eqref{eqn_switching_Poishmc} again. We fix the switches $\s(t) = k$ and $\s(t)  = l$ for all $t \geq 0$ and denote the fixed switch (i.e., without any regime-switching mechanism) processes with $(X^{(k)} , V^{(k)})$ and $(X^{(l)}, V^{(l)})$, respectively. Thanks to \cite[Proposition~3.7]{bou2017randomized}, there exist times $t_1  > 0$ and $t_3  > 0$ such that the unswitched processes $(X^{(k)} , V^{(k)})$ and $(X^{(l)}, V^{(l)})$ satisfy minorization on a compact set $\mathcal{C}$. We take $t_2 >0$ to be an arbitrary intermediate transition time whose choice we will make later. Let $t^* = t_1 + t_2 + t_3$. We apply the Chapman-Kolmogorov equation twice, expanding over all possible intermediate states $(y, m)$ at time $t_1$, and $(z, n)$ at time $t_1 + t_2$, to get 
\begin{align*}
    P_{t^*}&\big((y_0, k), A \times \{l\}\big) \\
    &= \sum_{i \in \mathcal{M}} \sum_{n \in \mathcal{M}} \int_{\mathbb{R}^{2d}} \int_{\mathbb{R}^{2d}} P_{t_1}\big((y_0, k), \mathrm{d}y \times \{i\}\big) P_{t_2}\big((y, i), \mathrm{d}z \times \{n\}\big) P_{t_3}\big((z, n), A \times \{l\}\big).
\end{align*}
 Restricting the first sum to $i = k$, the second sum to $n = l$, and the spatial integral over $z$ to the compact set $\mathcal{C}$ yields
 \begin{align}\label{eq:CK_bound}
    P_{t^*}&\big((y_0, k), A \times \{l\}\big) \nonumber\\
    & \geq \int_{\mathbb{R}^{2d}} \int_{\mathcal{C}} P_{t_1}\big((y_0, k), \mathrm{d}y \times \{k\}\big) P_{t_2}\big((y, k), \mathrm{d}z \times \{l\}\big) P_{t_3}\big((z, l), A \times \{l\}\big).
\end{align}
We  can get a lower bound on first and third terms of the integrand by conditioning on the event of no switching.  
For the third term, starting at $z \in \mathcal{C}$ and remaining in $l$ for duration $t_3$, we apply the unswitched minorization condition to arrive at
\begin{equation}\label{eq:bound3}
    P_{t_3}\big((z, l), A \times \{l\}\big) \geq e^{-\bar{q}_{\max} t_3} P^{(l)}_{t_3}(z, A) \geq e^{-\bar{q}_{\max} t_3} \epsilon_{l} \nu_{l}(A),
\end{equation}
where $\bar{q}_{\max} = \sup_{x \in \mathbb R^d}\max_{i \in \mathcal{M}}\q_i(x)$.  
Similarly, for the first term, conditioning on starting at $y_0 \in \mathcal{C}$ and remaining in $k$ for duration $t_1$ yields 
\begin{equation}\label{eq:bound1}
    P_{t_1}\big((y_0, k), \mathrm{d}y \times \{k\}\big) \ge e^{-\bar{q}_{\max} t_1} P^{(k)}_{t_1}(y_0, \mathrm{d}y) \geq e^{-\bar{q}_{\max} t_1} \epsilon_{k} \nu_{k}(\mathrm{d}y).
\end{equation}
Using \eqref{eq:bound3} and \eqref{eq:bound1} in \eqref{eq:CK_bound} gives
\begin{align*}
    P_{t^*}&\big((y_0, k), A \times \{l\}\big) 
    \geq e^{-\bar{q}_{\max}(t_1+t_3)} \epsilon_k \epsilon_{l} \nu_{l}(A) \int_{\mathbb{R}^{2d}} P_{t_2}\big((y, k), \mathcal{C} \times \{l\}\big) \nu_k(\mathrm{d}y).
\end{align*}
 Let $B_k\subset\operatorname{int}\mathcal{C}$, i.e. interior $\mathcal{C}$. From \cite[Proposition~3.7]{bou2017randomized}, we know that $\nu_k$ is a probability measure on $\mathcal{C}$ with $\nu_k(B_k)>0$. By
continuity of the Hamiltonian flow, for $t_2>0$ sufficiently small, the continuous component starting from any $y\in B_k$ remains in $\mathcal{C}$ on
$[0,t_2]$ conditional on the event that there is no velocity-refreshment jump, which has probability
$e^{-\lambda t_2}>0$. If $l=k$, the event of no switching of regime and no velocity-refreshment jump gives that $P_{t_2}\big((y,k),\mathcal C\times\{k\}\big)>0$, $y\in B_k$. If $l\neq k$, then using continuity and strict positivity of $\q_{kl}$, the
event that exactly one switch $k\to l$ occurs during $[0,t_2]$, no further
switch occurs, and no velocity-refreshment jump occurs has strictly positive
probability. On this event the continuous component remains in $\mathcal C$
and the terminal regime is $l$. Hence, $P_{t_2}\big((y,k),\mathcal{C}\times\{l\}\big)>0$, $y\in B_k$. Consequently,
\begin{align}
\int_{\mathbb R^{2d}}
P_{t_2}\big((y,k),\mathcal C\times\{l\}\big)\nu_k(\mathrm dy)
\geq \int_{B_k}
P_{t_2}\big((y,k),\mathcal{C}\times\{l\}\big)\nu_k(\der y)>0.
\end{align}
Therefore, 
\begin{align*}
     P_{t^*}&\big((y_0, k), A \times \{l\}\big)  \geq \epsilon \tilde{\nu}(A\times\{l\}),
\end{align*}
where $\tilde{\nu}(\der z, \{j\}) = \nu_l(\der z) I_{j = l} $ and
$ 
\epsilon = \min_{k,l \in \mathcal{M}}e^{-\bar{q}_{\max}(t_1+t_3)} \epsilon_k \epsilon_{l} \int_{B_k} P_{t_2}\big((y, k), \mathcal{C} \times \{l\}\big) \nu_k(\mathrm{d}y).
$

\section{Numerical integrators}
In Subsections~\ref{eb_subsection} and~\ref{ebs_subsection}, we present numerical integrators for randomized HMC dynamics \eqref{eqn_Poishmc} and regime-switching HMC dynamics \eqref{eqn_switching_Poishmc}, respectively.  
\subsection{[E, B] schemes }\label{eb_subsection}
Splitting techniques are a standard way to construct higher-order weak or strong approximations of stochastic dynamics. The idea is to decompose the  SDEs into components that are either exactly solvable or computationally straightforward to approximate. This approach has successfully been utilized for Hamiltonian ODEs and Langevin SDEs \cite{leimkuhler2013rational, bourabeeowhadi2010,alamo2016technique}. In this subsection, we consider the SDEs \eqref{eqn_Poishmc} and partition the dynamics in the following manner:
\begin{equation}
\begin{pmatrix} \der X(t) \\ \der V(t) \end{pmatrix} = \underbrace{\begin{pmatrix} 0 \\ -\nabla U(X(t))\der t \end{pmatrix}}_{\mathcal{B}} + \underbrace{\begin{pmatrix} V(t) \der t \\ \displaystyle\int_{\mathbb{R}^d} (-V(t^-) + \cos(\phi) V(t^{-}) + \sin( \phi) z) N(\der t, \der z) \end{pmatrix}}_{\mathcal{E}},
\end{equation}
where each of these parts can be solved exactly. 

Fix a final time $T$. Consider a uniform partition of $[0,T]$ with discretization parameter $h \in (0,1)$ such that the number of steps is $\lfloor T/h\rfloor$. Let $N_t$ denote the Poisson process with intensity $\lambda$.  Let $N_h $ be the total number of jumps in the interval $[t_k, t_k + h)$ and conditioned on $N_h = n$, let $\tau_1, \tau_2, \dots, \tau_{n}$ be the exact times when jumps occur which are distributed as the order statistics simulated from $\mathcal{U}(t_k, t_k+h)$. We define the boundaries as $\tau_0 = t_k$ and $\tau_{n+1} = t_k +h$. Let $Z_i \sim \mathcal{N}(0, I_d)$, $i = 1,\dots, n$. On any given interval $ [\tau_i, \tau_{i+1})$, $i =0,\dots, n$, the velocity in $\mathcal{E}$ is  constant and given by
\begin{equation}
    \zeta_i = v \cos^{i}(\phi) + \sin(\phi)\sum_{j =1}^{i}\cos^{i-j}(\phi)Z_j.
\end{equation}
This also means that integral $\int_{0}^{h}V(s) \der s$ is simply a summation over the inter-jump intervals. Having all the ingredients, we can write the exact solution of $\mathcal{B}$ and $\mathcal{E}$ as follows: 
\begin{align}
    \mathscr{S}_h^{\mathrm{B}}(x,v) &= (x, v - h\nabla U(x)), \label{B_update_swi_HMC} 
     \\
    \mathscr{S}_h^{\mathrm{E}}(x,v) &= \left(x + \textstyle\sum\limits_{i=0}^{N_h}\zeta_{i} (\tau_{i+1} - \tau_{i}), \zeta_{N_h}\right),
\end{align}
and, for brevity, we use same notation to denote the conditional update
\begin{align}
    \mathscr{S}_h^{\mathrm{E}}(x,v; N_h = n) &= \left(x + \textstyle\sum\limits_{i=0}^{n}\zeta_{i} (\tau_{i+1} - \tau_{i}), \zeta_{n}\right).
\end{align}
Since our main interest is in sampling, the update rule for $\mathcal E$ step can be simplified via the Cholesky decomposition. Specifically, the same joint distribution of the update in $(x,v)$ can be preserved by generating two Gaussian random vectors $\xi_1, \xi_2 \sim \mathcal{N}(0, I_{d})$, providing the following representation: 
\begin{align}\label{E_update_swi_HMC}
      \mathscr{S}_h^{\mathrm{E}} &= \left(x +  m_x + \sin(\phi)\bigg( \frac{\sigma_{xv}}{\sqrt{\sigma_{vv}}} \xi_1 + \sqrt{\sigma_{xx} - \frac{\sigma_{xv}^2}{\sigma_{vv}}} \xi_2\bigg) , v\cos^{n}(\phi) + \sqrt{1 - \cos^{2n}(\phi)}\xi_1\right),
\end{align}
where
\begin{align*}
m_x = v\sum_{i=0}^{n}\cos^{i}(\phi)(\tau_{i+1} - \tau_{i}), \quad   \sigma_{vv} = \frac{1 - \cos^{2n}(\phi)}{\sin^2(\phi)}, \quad 
    \sigma_{xx}  = \sum_{l=1}^{n}a_l^2, \quad
    \sigma_{xv}  =\sum_{l=1}^{n} a_{l}\cos^{n-l}(\phi),
\end{align*}
with $ a_l = \sum_{i = l}^{n}(\tau_{i+ 1} - \tau_{i})\cos^{i-l}(\phi) $. Therefore, $a_{l} = \tau_{l+1}- \tau_{l} + \cos(\phi) a_{l+1}$ with $a_n = t_k + h - \tau_{n}$. We assign $\sigma_{vv} = 1$ and $\sigma_{xx} = \sigma_{xv} = 0$ if $n = 0$.
\begin{equation}
\begin{array}{cc}
 \text{[EBE]} & \text{[BEB]} \\ [6pt]
\begin{aligned}
(x_{1/2}, \hat{v}_{1/2}) &= \mathscr{S}_{h/2}^{\mathrm{E}}(x, v), \\
(x_{1/2},v_{1/2}) &= \mathscr{S}_{h}^{\mathrm{B}}(x_{1/2}, \hat{v}_{1/2}), \\
(x_{1}, v_{1}) &= \mathscr{S}_{h/2}^{\mathrm{E}}(x_{1/2}, v_{1/2}).
\end{aligned}
&
\begin{aligned}
(x,\hat{v}_{1/2}) &= \mathscr{S}_{h/2}^{\mathrm{B}}(x, v), \\
(x_{1}, v_{1/2}) &= \mathscr{S}_{h}^{\mathrm{E}}(x, \hat{v}_{1/2}), \\
(x_1,v_{1}) &= \mathscr{S}_{h/2}^{\mathrm{B}}(x_{1}, v_{1/2}).
\end{aligned}
\end{array}
\end{equation}

Instead of decomposing into deterministic flow and jump component as is done for PDMPs \cite{bertazzi2025piecewise}, in the proposed schemes above we split the deterministic drift into two subflows and include one of them in the jump component leading to $\mathrm{[BEB]}$ and $\mathrm{[EBE]}$ integrators. This construction is motivated by viewing RHMC  as a stochastic differential equation with jumps \eqref{eqn_Poishmc}. In contrast, the papers devoted to numerical integrators underlying Hamiltonian Monte Carlo or its variants focus mainly on the discretization of Hamiltonian dynamics over a fixed duration as the refreshments arrive at fixed times (see for example \cite{blanes2014numerical, akhmatskaya2009comparison, akhmatskaya2008gshmc, bou_serna2018geometric}).  The presented integrators have clear resemblance to $\mathrm{[UBU]}$ integrator for discretizing kinetic Langevin SDEs proposed in \cite{alamo2016technique} and investigated in \cite{sanz2021wasserstein,chada2023unbiased}. This means that several new numerical integrators can be designed for \eqref{eqn_Poishmc} as has been done for kinetic Langevin dynamics. Such a study  will be of interest considering path-wise properties of \eqref{eqn_Poishmc} are different from kinetic Langevin SDEs.

\subsection{[E, B, S] schemes}\label{ebs_subsection}
Here, we include the switching mechanism in the [E, B] schemes introduced in the previous subsection. During the S-step, the position $x$ is kept fixed. This results in a continuous-time  Markov chain (CTMC) $\s(t)$ with finite state space whose generator matrix is $Q$ (from \eqref{swi_hmc_Qmatrix}).
Let $\Pi(t)$ denote the row vector of probabilities of the chain at time $t$. Then, its evolution is described by
\begin{align}
\frac{\der }{\der t}\Pi(t) = \Pi(t) Q(x),
\qquad 
\Pi(t) = \Pi(0) \exp({t Q(x)}), \quad t \in [0,h], \label{swi_hmc_3.50}
\end{align}
where $\exp(\cdot)$ represents the matrix exponential. Consequently, the semigroup generated by $\mathrm{S}$ acts on functions 
$g = g(x,v,m)$ as
\begin{equation}
(P_t^\mathrm{S} g)(x,v,m)
=
\sum_{j \in \mathcal{M}} (\exp({tQ(x)}))_{mj} \, g(x,v,j).
\end{equation}
 To incorporate the switching mechanism in the [E, B] scheme, we need a random sample from $\Pi(h)$ i.e. a sample from the probability distribution of the continuous time Markov chain $\s$ (note that $x$ is fixed) at time $h$.  The practicability of the above solution is limited by the required computational cost and memory usage if the size of matrix $Q$ is large. There have been several strategies proposed in the literature to simulate CTMC : (i) Gillespie method \cite{gillespie1976general,ross2022simulation,reibman1989markov}, commonly known as stochastic simulation algorithm (SSA) or direct jump method, and (ii) Uniformization method \cite{jensen1953markoff}. We can also use a higher order approximation which induces $\mathcal{O}(h^\alpha)$, $\alpha \geq 3$ error in each step and hence maintaining second order of convergence of resulting [E, B, S] schemes. Since we are interested in a random sample at fixed time $h$, we will first discuss the uniformization technique to simulate S step. This is exact in distribution like the Gillespie method. The idea behind the algorithm is to  transform the CTMC, where transition to next state depends on current state, to a discrete time Markov chain via random time change. Here, for the sake of completeness, we provide a brief derivation. Let $P = I +  \frac{1}{\eta}Q $ where $\eta \geq \max_{i} \q_i$. This implies 
\begin{align*}
    \Pi(t) &= \Pi(0) \exp((\eta P - \eta I )t) = \Pi(0) \exp(-\eta I t) \exp(\eta P t) \\  & = e^{-\eta t}\Pi(0) \exp(\eta P t) 
 = e^{-\eta t}\sum_{n=0}^{\infty} \frac{(\eta t)^{n}}{n !}\Pi(0)P^n. 
\end{align*}
The uniformization allows us to simulate a continuous time Markov chain with generator matrix $Q$ by simulating discrete time Markov chain with transition matrix $P$. If $\s$ is our original continuous time Markov chain and we denote the discrete time Markov chain with transition matrix  $P$ as $\mathfrak{m}$, then $\s(t) = \mathfrak{m}_{N^{\eta}(t)}$ where $N^{\eta}(t)$ is a Poisson clock with intensity $\eta$. Let $P_{ij}$ denote the $ij$-th entry in matrix $P$. Therefore, given $x$, the update in switching Markov chain is given by 
\begin{equation}
\mathscr{S}_h^{\mathrm{S}}(\mathbf{s}) = \mathfrak{m}_{n}, \text{ where }
\begin{cases} 
\text{simulate } n := N^{\eta} \sim \text{Poisson}(\eta h), \\ 
\mathfrak{m}_i \sim P_{\mathfrak{m}_{i-1} \cdot} \text{ for } i = 1, \dots, n, & P = I + \frac{1}{\eta}Q,\; \mathfrak{m}_0 = \mathbf{s},
\end{cases}
\end{equation}
where $\mathfrak{m}_i \sim P_{\mathfrak{m}_{i-1} \cdot}$ means
\begin{equation}
\mathfrak{m}_i = \min \left\{ j \in \mathcal{M} \ \Bigg| \ \sum_{l=1}^{j} P_{\mathfrak{m}_{i-1} l} \geq u_i \right\}, \text{ with } u_i \sim \mathcal{U}(0, 1).
\end{equation}
Having introduced the uniformization technique above, we may now combine it with the $[\text{E,\;B}]$ schemes in order to define the class of $[\text{E,\;B,\;S}]$ schemes for switching Hamiltonian Monte Carlo dynamics \eqref{eqn_switching_Poishmc}. In this setting, there can be six possible symmetric concatenations, namely
[SEBES], [SBEBS], [BESEB], [BSESB], [ESBSE] and [EBSBE].
Except [SBEBS] and [EBSBE] all other schemes require $M+1$ gradient evaluations for $M$ steps. We write below two possible concatenations [SEBES] and [BESEB]. 
\begin{equation}\label{swi_hmc_sebes}
\begin{array}{cc}
 \text{[SEBES]} & \text{[BESEB]} \\ [6pt]
\begin{aligned}
\s_1 &=  \mathscr{S}_{h/2}^{\mathrm{S}}(\s_0\; ; \; x), \\
(x_{1/2}, \hat{v}_{1/2}) &= \mathscr{S}_{h/2}^{\mathrm{E}}(x, v), \\
(x_{1/2}, v_{1/2}) &= \mathscr{S}_{h}^{\mathrm{B}}(x_{1/2}, \hat{v}_{1/2}\; ;\; \s_1), \\
(x_{1}, v_{1}) &= \mathscr{S}_{h/2}^{\mathrm{E}}(x_{1/2}, v_{1/2}),\\
\s_2 &=  \mathscr{S}_{h/2}^{\mathrm{S}}(\s_1\; ; \; x_1).
\end{aligned}
&
\begin{aligned}
(x, \hat{v}_{1/2}) &= \mathscr{S}_{h/2}^{\mathrm{B}}(x, v\;;\; \s_0), \\
(x_{1/2}, v_{1/2}) &= \mathscr{S}_{h/2}^{\mathrm{E}}(x, \hat{v}_{1/2}), \\
\s_1 &=  \mathscr{S}_{h}^{\mathrm{S}}(\s_0\; ; \; x_{1/2}), \\
(x_1, \hat{v}_{1}) &= \mathscr{S}_{h/2}^{\mathrm{E}}(x_{1/2}, v_{1/2}),\\
 (x_1, v_1) &= \mathscr{S}_{h/2}^{\mathrm{B}}(x_1, \hat{v_1}\; ; \; \s_1).
\end{aligned}
\end{array}
\end{equation}

If $\eta$ is large, simulation based on uniformization may have large number of virtual jumps. 
Instead of uniformizing the chain, we may  simulate its jump times via Gillespie's construction.  Recall that $\q_\s = \sum_{j\neq \s }\q_{\s j}$ denotes the total outgoing rate from state $\s$.  Starting from $\mathfrak{m}_0 = \s $, $\hat{t}_0 = t_k$, define recursively
\begin{align}\label{swi_hmc_3.13}
    \tau_{1} = \mathcal{E}(\q_{\mathfrak{m}_{0}}),\quad \hat{t}_{1} := \hat{t}_0 + \tau_{1}.
\end{align}
If $\hat{t}_1 > t_k + h$ then the chain does not jump during the interval $[t_k,t_k +h]$, and the S step is simply $\mathscr{S}_h = \mathfrak{m}_{0} = \s$. Else the next regime is sampled as
\begin{align}\label{swi_hmc_3.14}
    \mathbb{P}(\mathfrak{m}_{1} = j\; | \; \mathfrak{m}_{0} = \s, x) = \frac{\q_{\s j}(x)}{\q_\s(x)}\qquad j \neq \s.
\end{align}
Equivalently, with \(u_1\sim\mathcal U(0,1)\),
\begin{align}\label{swi_hmc_3.15}
\mathfrak{m}_{1} =
\min\left\{
j\in\mathcal{M}\setminus\{\s\}
\;\Bigg|\;
\sum_{\ell\in\mathcal M\setminus\{\s\},\,\ell\leq j}
\frac{\q_{\s\ell}(x)}{\q_{\s}(x)}
\ge u_1
\right\}.
\end{align}
The procedure \eqref{swi_hmc_3.13}-\eqref{swi_hmc_3.15} is repeated for $i =1,\dots, n$ where $n$ is the largest integer such that $\hat{t}_{n}\le t_k + h$, i.e.  until the next proposed jump time exceeds $t_k + h$. Therefore, the Gillespie update can be written as
\begin{align}
\mathscr{S}_h^{\mathrm{S}}(\mathbf{s})
=
\mathfrak{m}_{n}.
\end{align}
We can simply replace the S step in the [E, B, S] schemes with Gillespie's construction. 

One simplification is possible if we take the switching rates of the form
\begin{align}
    \q_{ij} = c p_j, \quad i \neq j,
\end{align}
where $c$ is some positive constant, $p_j \geq 0$ and $\sum_{j=1}^m p_j = 1$. As is clear, we have
\begin{align}
    \q_{ii} = -\sum_{j \neq i}\q_{ij} = - c(1 -p_i).
\end{align}
Then the generator can be written as $
Q(x)=c\bigl(\mathbf{1}_m p(x)^{\top}-I_m\bigr)$
where $p(x)=(p_1(x),\ldots,p_m(x))^\top$ and $\mathbf{1}_m \in \mathbb{R}^{m}$ with each component $1$. Since
\(\mathbf{1}p(x)^\top\) is an idempotent matrix, the transition matrix is explicit, i.e.
\begin{align}
\exp(hQ(x))
=
e^{-ch}I_m+\bigl(1-e^{-ch}\bigr)\mathbf{1}_m p(x)^\top .
\end{align}
In component-wise form, we have
\begin{align}
\mathbb{P}(\s_1=j\mid \s_0=i,x)
= e^{-ch}I({\{i=j\}}) + \bigl(1-e^{-ch}\bigr)p_j(x).
\end{align}
Thus the S step can be sampled exactly without 
uniformization or Gillespie method. 
\paragraph{Comparison with switching Langevin algorithm from \cite{tretyakov2025sampling}.}
  To sample from mixture distributions, the following stochastic differential equation with state-dependent switching is considered in \cite{tretyakov2025sampling}:
    \begin{align*}
        &\der X(t) = -\frac{1}{2}\nabla U(X(t); \s(t))\der t + \der B(t), \;\;\mathbb{P}(\s(t+\delta)=j \mid \s(t)=i) = \q_{ij}(X(t))\delta + o(\delta), \quad j \neq i,
    \end{align*}
    where $\s(t)$ is the conditionally Markov switching state and $B(t)$ is a standard Brownian motion, and the switching rates $\q_{ij}(x)$ are from \eqref{swi_hmc_Qmatrix}.   Switching Langevin algorithm proposed in \cite{tretyakov2025sampling} is based on  an explicit Euler scheme with constant time step $h$:
    \begin{align}\label{swi_lang_algo_1}
        X_{k+1} &= X_k - \frac{h}{2} \nabla U(X_k; \s_k) + \sqrt{h} \xi_{k+1}, \quad  \xi_{k+1} \sim \mathcal{N}(0, I_d),
    \end{align}
    where state-dependent regime-switching chain is approximated as
    \begin{align}\label{swi_lang_algo_2}
        \s_{k+1} &= 
        \begin{cases} 
            j \neq \s_k & \text{with prob } \q_{\s_k j}(x_k)h, \\ 
            \s_k & \text{with prob } 1 - \q_{\s_k}(x_k)h. 
        \end{cases}
    \end{align}
In comparison with \eqref{swi_lang_algo_2}, we use uniformization technique or Gillespie's algorithm to simulate regime changes. Moreover, thanks to splitting technique relying on exact simulation of switching chain, error incurred in $\mathrm{[E, B, S]}$ scheme is $\mathcal{O}(h^2)$ (with one force evaluation in most of the concatenations) as compared to $\mathcal{O}(h)$ in \eqref{swi_lang_algo_1}-\eqref{swi_lang_algo_2}.

\section{Main Results}
Our first main result establishes geometric ergodicity of numerical scheme $\mathrm{[SEBES]}$ proposed in previous section for \eqref{eqn_switching_Poishmc}. We ensure that there exists a constant $h_0 \in (0,1)$ such that the geometric ergodicity 
result is uniform in $h \in (0,h_0)$. This is along the lines of \cite{bou2013nonasymptotic,lemakstoltz2016} in contrast to geometric ergodicity  obtained in \cite{mattingly2002ergodicity,bourabeeowhadi2010} for a fixed discretization parameter.

We will use the following Lyapunov function in our analysis \cite{mattingly2002ergodicity,bou2017randomized}:
\begin{align}\label{eqn_swi_hmc_lyapunov}
W (x,v)  = 1 + \mathcal{V}(x,v) = 1 + a|x|^2 + b|v|^2 + c( x \cdot v)  \text{ with } ab > c^2/4.
\end{align}
The further conditions needed on $a, b$ and $c$ are discussed in the proof of Lemma~\ref{lemma_lyapunov_ebe}.

\begin{theorem}[Geometric ergodicity of $\mathrm{[SEBES]}$]\label{sebes_geom_erg_thrm} Let Assumptions~\ref{assump_1_well_posed}-\ref{assump_2_ergo} hold. 
Let the following condition be satisfied:
    \begin{equation}\label{condi_swi_hmc_lambda}
        \lambda > \max\left( \frac{2 \sqrt{2}L/\sqrt{\kappa}}{\sin^2\phi}, \frac{\sqrt{2}L /\sqrt{\kappa}}{2(1 - \cos \phi)}  \right).
    \end{equation}
Then the Markov chain generated by the $\mathrm{[SEBES]}$ scheme is geometrically ergodic. In particular, its one-step transition kernel $\mathcal P^{SEBES}_h$ admits an invariant probability measure, denoted by $\mu_h$, on $\mathbb R^{2d}\times\mathcal M$, and there exist constants $K>0$ and $\gamma > 0$ such that, for every measurable $\varphi$ satisfying
\begin{equation}
|\varphi(x,v,i)|\le W(x,v),
\qquad (x,v,i)\in \mathbb R^{2d}\times\mathcal{M},
\end{equation}
we have
\begin{align}
\left| \mathbb{E}\big[\varphi(X_n,V_n,\s_n)\mid (X_0,V_0,\s_0)=(x,v,i)\big]
-\mu_h(\varphi)
\right|\leq
K W(x,v) e^{-n \gamma h},
\end{align}
where $W$ is a strictly positive Lyapunov function from \eqref{eqn_swi_hmc_lyapunov}.
\end{theorem}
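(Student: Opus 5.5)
The plan is to verify the two hypotheses of Harris' theorem for the one-step kernel $\mathcal P^{SEBES}_h$, uniformly in $h\in(0,h_0)$: a Lyapunov drift condition of the form
\begin{align*}
\mathcal P^{SEBES}_h W(x,v,i) \le (1-\gamma_0 h) W(x,v) + C_0 h,
\end{align*}
and a minorization condition on a sublevel set of $W$ for the $\lceil T/h\rceil$-step kernel, with $T$ fixed independently of $h$. With the constants in both conditions $h$-uniform, the quantitative Harris theorem (in the form of Hairer--Mattingly, as used in \cite{bou2013nonasymptotic,lemakstoltz2016}) gives contraction of $\mathcal P^{SEBES}_{h}$ iterated $\lceil T/h\rceil$ times in the weighted total variation norm induced by $W$. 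It also gives existence and uniqueness of $\mu_h$. Interpolating between multiples of $\lceil T/h\rceil$ with the drift bound then yields the rate $K W e^{-n\gamma h}$ for all $|\varphi|\le W$.

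\textbf{Drift.} The function $W$ in \eqref{eqn_swi_hmc_lyapunov} does not depend on the regime. The S steps act only on $\s$ and leave $(x,v)$ unchanged, so they leave $W$ invariant. Consequently $\mathcal P^{SEBES}_h W(x,v,i)$ is bounded by the supremum over $k\in\mathcal M$ of the [EBE] kernel applied to $W$ with force $\nabla U(\cdot;k)$. Here $k$ is the regime $\s_1$ sampled by the first S step and is independent of the subsequent E and B noise. It therefore suffices to prove the drift inequality for [EBE] with potential $U(\cdot;k)$, with constants uniform in $k$; this is the content of Lemma~\ref{lemma_lyapunov_ebe}, which I take as given. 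Uniformity in $k$ holds because Assumptions~\ref{assump_1_well_posed}--\ref{assump_2_ergo} use the same $L,\kappa,\ell$ for every regime.

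For completeness, the proof of the lemma would proceed as follows. Compute $\mathbb E[\mathcal V]$ after $\mathscr S^{\rE}_{h/2}$ by conditioning on $N_{h/2}$. The zero-jump event contributes the free flow, which changes $\mathcal V$ by $O(h)(c|v|^2+\dots)$. The one-jump event, of probability $\approx\lambda h/2$, contracts $b|v|^2$ by the factor $\cos^2\phi$ and the cross term by $\cos\phi$. The B step contributes $-h c\,(x\cdot\nabla U)\le -hc\kappa|x|^2+hc\ell$ plus $O(h)$ terms bounded via $L$. Choosing $a,b,c$ so that the resulting quadratic form is negative definite is exactly what requires condition \eqref{condi_swi_hmc_lambda}: the factors $\sin^2\phi$ and $1-\cos\phi$ come from the velocity and cross-term contractions. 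The $O(h^2)$ remainders are absorbed once $h<h_0$.

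\textbf{Minorization.} Fix a compact set $\mathcal C=\{W\le R\}$ with $R$ large relative to $C_0/\gamma_0$, and consider $n_h=\lceil T/h\rceil$ steps. I would mimic the continuous-time argument in the paper: bound the kernel from below by restricting to paths with prescribed switching behaviour.
\begin{itemize}[label={}]
\item (a) For the first and last thirds of the steps, restrict to the event that every S step leaves the regime unchanged. Since each S step over time $h/2$ keeps the regime with probability at least $e^{-\bar q_{\max}h/2}$, this event has probability at least $e^{-\bar q_{\max}T}$, uniformly in $h$.
\item (b) On that event the chain coincides with the unswitched [EBE] chain for a fixed potential. For that chain I need an $h$-uniform minorization.
\item (c) In the middle third, force a single switch $k\to l$ using the positivity and continuity of $\q_{kl}$ on compacts, as in the continuous-time proof.
\end{itemize}

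Step (b) is the main obstacle. \cite[Proposition~3.7]{bou2017randomized} applies to the exact RHMC flow, not to the discretization. My first attempt would exploit the explicit Gaussian structure of \eqref{E_update_swi_HMC}. Conditioned on at least one refreshment in each E substep, $\mathscr S^{\rE}$ gives $(x,v)$ a nondegenerate Gaussian law. Indeed, $\sigma_{xx}-\sigma_{xv}^2/\sigma_{vv}>0$ as soon as $n\ge1$, provided the jump times are not degenerate. Conditioning on one jump in each of two consecutive half-steps, with probability of order $h^2$, is however not $h$-uniform.

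I would therefore instead take $T$ of order one and condition on at least two jumps over the whole horizon, at times of order one apart. This event has $h$-independent probability, and between the jumps the scheme is a deterministic Verlet-like map, which is Lipschitz uniformly in $h$. Composing Gaussian refreshments with these deterministic maps gives a density bounded below on a ball, uniformly over starting points in $\mathcal C$ and over $h<h_0$; this is a controllability argument analogous to the exact-flow one. The remaining care is in showing that the density lower bound does not degenerate as $h\to0$, which I would handle by comparing the discrete flow to the exact Hamiltonian flow, with $O(h^2)$ error, on compact sets.
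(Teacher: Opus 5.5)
Your drift argument and the Harris framework match the paper. The paper likewise reduces $\mathcal P^{\mathrm{SEBES}}_h W$ to an $[\EBE]$ step with force $\nabla U(\cdot;\s_1)$, uses regime-independent $L,\kappa,\ell$, and applies Harris' theorem to the skeleton $(\mathcal P_h)^{K_\varepsilon}$.

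The gap is step (b), the $h$-uniform minorization of the discretized chain. You flag it as the main obstacle but do not resolve it. Let $\Phi$ be the endpoint of the recursion, viewed as a function of the two refreshment Gaussians $Z=(Z_1,Z_2)$. A change-of-variables argument needs two things:
\begin{itemize}
\item a uniform \emph{lower} bound on $|\det D_Z\Phi|$;
\item a ball of $h$-independent radius contained in the image of $Z\mapsto\Phi$.
\end{itemize}
Uniform Lipschitz bounds on the inter-jump Verlet maps only give upper bounds. With refreshments separated by times of order one, this nondegeneracy can genuinely fail. Already for $U=\omega^2|x|^2/2$ and the exact flow, $\det D_Z\Phi=\sin^{2d}\phi\,\big(\sin(\omega(\tau_2-\tau_1))/\omega\big)^d$, which vanishes at $\tau_2-\tau_1=\pi/\omega$. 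For nonconvex $U$ such conjugate points depend on the starting point, so you cannot simply exclude them uniformly.

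Comparing with the exact flow up to $O(h^2)$ does not help either. The minorization of \cite[Proposition~3.7]{bou2017randomized} is a statement about laws, and a kernel that is only pathwise $O(h^2)$-close does not inherit it. A $C^1$ comparison would still need a uniform Jacobian bound for the exact flow, which you have not established.

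Separately, your remark that $\sigma_{xx}-\sigma_{xv}^2/\sigma_{vv}>0$ once $n\ge1$ is false. For $n=1$ one has $\sigma_{xx}=a_1^2$, $\sigma_{xv}=a_1$, $\sigma_{vv}=1$, so the Schur complement vanishes. One refreshment injects only $d$ dimensions of noise, which is why two are needed.

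The missing idea, which is how the paper proves Lemma~\ref{lemma_uniform_minor_sebes}, is to use a \emph{short} but $h$-independent horizon $\varepsilon$. That is, take $K_\varepsilon=\lceil\varepsilon/h\rceil$ steps with $h_0\le\varepsilon/12$, and condition on exactly one refreshment in $[\varepsilon/6,2\varepsilon/6]$ and one in $[4\varepsilon/6,5\varepsilon/6]$. This event has probability at least $e^{-\lambda(\varepsilon+h_0)}(\lambda\varepsilon/6)^2$. The argument then runs as follows.
\begin{itemize}
\item The force-free endpoint map $\Phi^0$ is affine in $Z$, with explicit Jacobian $J^0$ satisfying $|\det J^0|=\sin^{2d}\phi\,(\tau_2-\tau_1)^d\ge\sin^{2d}\phi\,(\varepsilon/3)^d$ and $\|(J^0)^{-1}\|\le C_\phi/\varepsilon$.
\item A discrete Gronwall argument on the $[\EBE]$ recursion itself gives $\|D_Z\Phi-J^0\|\le C(\varepsilon+h_0)^2$, using only $\|\nabla^2U\|\le L$.
\item For small $\varepsilon$ the force is therefore a relative perturbation of size at most $1/2$, so $|\det D_Z\Phi|\ge 2^{-2d}|\det J^0|$.
\item A quantitative inverse function theorem then gives a ball $D$ of $h$-independent radius in the image, and change of variables bounds the density below on $D$.
\end{itemize}
No comparison with the exact flow is needed, and the bounds are uniform over switching paths because only regime-independent constants enter.

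With this in hand, your three-stage treatment of the switching (no switch, forced switch, no switch) would go through, but the paper's is simpler. Conditioning on $|Z|\le R$ keeps all trajectories in a fixed compact set on which $\q_{ij}$ is bounded below. The probability of ending in any target regime is then bounded below by $p_s>0$ via zero or one switch.
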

The theorem is proved once we prove the following two lemmas. 
\begin{lemma}
\label{lemma_uniform_minor_sebes}
Let Assumptions~\ref{assump_1_well_posed}-\ref{assump_2_ergo} hold.  Let $\mathcal{P}_{h}^{\mathrm{SEBES}}$ denote the one-step transition kernel of the
\(\mathrm{[SEBES]}\) scheme on $\mathbb R^{2d}\times\mathcal M$. Let \(\mathcal{C}\subset\mathbb R^{2d}\) be compact.  Then there exist $\varepsilon>0$, $h_0 > 0$,
\(\alpha>0\), and a probability measure $\nu$ on $\mathbb R^{2d}\times\mathcal M$
such that, with $
K_\varepsilon:=\left\lceil \frac{\varepsilon}{h}\right\rceil
$,
we have, for all $0 < h \leq h_0$, 
\begin{equation}
(\mathcal{P}_{h}^{\mathrm{SEBES}})^{K_\varepsilon}((x,v,i), O \times \{k\}) 
\geq
\alpha \nu(O \times \{k\}),\quad O\in  \mathscr{B}(\mathbb{R}^{2d}), \; k \in  \mathcal{M}, \; (x,v) \in \mathcal{C}, i\in \mathcal{M}.  
\end{equation}
\end{lemma}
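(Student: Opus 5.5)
We reduce the statement to a minorization for the fixed-regime [EBE] chain and then control the effect of the S steps separately, in the same spirit as the continuous-time argument given after Assumption~\ref{assump_2_ergo}.

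\emph{Step 1: probability of no switching.} Write one [SEBES] step as $\mathscr{S}^{\mathrm S}_{h/2}\circ \mathrm{EBE}(\cdot\,;\s)\circ\mathscr{S}^{\mathrm S}_{h/2}$. Since the rates are bounded by $\bar q_{\max}$, the S step leaves the regime unchanged with probability at least $e^{-\bar q_{\max}h/2}$. For uniformization this follows from $P_{ii}\ge 1-\q_i/\eta$; we can equally argue directly that the exact law gives $(e^{tQ})_{ii}\ge e^{-\bar q_{\max} t}$. Over $K_\varepsilon$ steps, the event that no switch occurs therefore has probability at least $e^{-\bar q_{\max}(\varepsilon+h)}$, uniformly in $(x,v)$ and in $h\le h_0$.

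\emph{Step 2: uniform minorization of the unswitched [EBE] chain.} On this event the $(x,v)$ component evolves as $K_\varepsilon$ steps of [EBE] with potential $U(\cdot;k)$. We show
\[
(\mathcal P_h^{\mathrm{EBE},k})^{K}((x,v),\cdot)\ge \alpha_0\,\nu_0(\cdot),\qquad K\approx \varepsilon/h,\ (x,v)\in\mathcal C,
\]
with a common measure $\nu_0$. The key source of noise is that in each E step with at least one refreshment, the Cholesky representation \eqref{E_update_swi_HMC} gives, conditionally on the jump times, a nondegenerate Gaussian in $(x,v)$ whenever $n\ge 2$. If $n=1$, the covariance is degenerate only when $\sigma_{xx}=\sigma_{xv}^2/\sigma_{vv}$, which is a null event in $\tau_1$.

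We therefore condition on the following event:
\begin{itemize}
\item exactly two refreshments occur in a single half-step (say the last one), with probability of order $(\lambda h)^2$;
\item no refreshments occur in the other half-steps, with probability at least $e^{-\lambda\varepsilon}$.
\end{itemize}
This event has probability only $O(h^2)$, which is not uniform in $h$, so it does not work as stated.

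The better choice is to condition on at least two refreshments somewhere in $[0,\varepsilon]$, an event with probability bounded below independently of $h$. Between refreshments, the dynamics is a deterministic [EBE]/Verlet-type flow of duration at most $\varepsilon$. For $\varepsilon$ small this flow is a smooth map, $C^1$-close to the identity on compacts uniformly in $h$ (using the Lipschitz bound on $\nabla U$ from Assumption~\ref{assump_1_well_posed}). A Gaussian refresh of $v$ followed by the position drift $x\mapsto x+sv$ over a time separation $s$ bounded below produces a density in $(x,v)$. Explicitly, the two refreshments at separated times give a joint Gaussian in $(x,v)$ with covariance determinant bounded below. Pushing this through near-identity diffeomorphisms gives a density bounded below on a fixed ball $B$, uniformly for $(x,v)\in\mathcal C$ and $h\le h_0$. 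We then take $\nu_0$ to be uniform on $B$.

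\emph{Step 3: reaching every regime.} To produce $\nu(O\times\{k\})$ for every target $k$, we reuse the three-stage argument from the continuous-time minorization. We first run $K_\varepsilon$ steps without switching to obtain $\nu_0$. We then permit exactly one $i\to k$ switch in a single S step. This has probability at least $c\,h$ by positivity and continuity of $\q_{ik}$ on compacts; alternatively, we use $\lceil\varepsilon/h\rceil$ steps in which the probability of exactly one switch is bounded below uniformly in $h$, with the state staying in a compact set on high probability. Finally, we again run without switching. Taking $\nu=\frac1m\sum_k \nu_0\otimes\delta_k$ and absorbing constants into $\alpha$ (the total number of steps is still of the form $\lceil\varepsilon'/h\rceil$) completes the proof.

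\emph{Main obstacle.} The hardest part is Step 2: obtaining a lower bound on the density that is uniform in $h$. This requires showing that the discrete Verlet-type maps are uniformly nondegenerate and that the refreshment-time randomness yields a covariance bounded below over a time horizon of order $\varepsilon$, rather than of order $h$.
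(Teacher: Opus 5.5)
Your overall architecture agrees with the paper's. You use two refreshments at times separated by order $\varepsilon$, a zero-force map that is affine in $(Z_1,Z_2)$ with $\det J^0=\sin^{2d}\phi\,(\tau_2-\tau_1)^d$, the force treated as a perturbation, and switching controlled by lower bounds on no-switch and one-switch probabilities. However, Step 2 has a genuine gap at exactly the point you call the main obstacle. You assert that the force acts only through \emph{near-identity diffeomorphisms} that preserve a uniform density lower bound, but this is not proved, and the reason you give would not suffice.

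The problem is that the free noise-to-state Jacobian is itself small. The noise reaches the position only through drifts over times of order $\varepsilon$, so the smallest singular value of $J^0$ is of order $\varepsilon$ and $\|(J^0)^{-1}\|\le C_\phi/\varepsilon$. Knowing that the inter-refreshment flow is $C^1$-close to the identity, or even to the free flow, up to $O(\varepsilon)$ allows a change in $D_Z\Phi$ of the same size. Such a change can cancel the shear $x\mapsto x+sv$ that produces the density. What you need is that the force affects the Jacobian only at second order: $\|D_Z\Phi-J^0\|\le C(\varepsilon+h_0)^2$. This is the paper's Claim~2, proved from the sensitivity recursion using $\|D_Zx^0_{k-1/2}\|\le C(t_{k-1/2}-\tau_1)_+$ and a discrete Gronwall argument. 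It gives $\|(J^0)^{-1}\|\,\|D_Z\Phi-J^0\|\le C(\varepsilon+h_0)^2/\varepsilon\le\tfrac12$ after shrinking $\varepsilon$ and taking $h_0\le\varepsilon/12$.

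You then still need surjectivity with controlled preimages. Every point of a fixed ball $D$ must have a preimage $Z$ with $|Z|\le R$, uniformly in $(x_0,v_0)\in\mathcal C$, $\tau_1\in I_1$, $\tau_2\in I_2$ and $h$. The paper obtains this from Claim~1 ($|Z^0|\le C_0/\varepsilon$), the bound $|\Phi(Z^0)-z_*|\le C(\varepsilon+h_0)$, and a quantitative inverse function theorem. Only then does the change of variables give a density $\ge\delta$ on $D$.

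Within your push-forward picture there is an equivalent repair. Every E/B sub-map has unit Jacobian, and between the two refreshments one can show $w\mapsto F^x(x,w)=c+s\,(w+\tilde g(w))$ with $\mathrm{Lip}(\tilde g)\le CLs^2<1$. This makes $(Z_1,Z_2)\mapsto$ final state a global bijection with Jacobian $\sin^{2d}\phi\,s^d(1+O(\varepsilon^2))$. But that is the same second-order estimate, and it is absent from your proposal.

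There are also smaller points. Your remark on a single refreshment is wrong. For $n=1$ one has $\sigma_{xx}=a_1^2$, $\sigma_{xv}=a_1$, $\sigma_{vv}=1$, so $\sigma_{xx}-\sigma_{xv}^2/\sigma_{vv}\equiv0$ and the update always has rank $d$. This is exactly why two refreshments are indispensable. It is also why \emph{at least two refreshments somewhere in $[0,\varepsilon]$} should be replaced by one refreshment in each of two fixed windows, such as $I_1=[\varepsilon/6,2\varepsilon/6]$ and $I_2=[4\varepsilon/6,5\varepsilon/6]$, and none elsewhere. That event has probability $\ge e^{-\lambda(\varepsilon+h_0)}(\lambda\varepsilon/6)^2$ and guarantees $\tau_2-\tau_1\ge\varepsilon/3$.

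In Step 3, $3\lceil\varepsilon/h\rceil$ is not of the form $\lceil\varepsilon'/h\rceil$ for any fixed $\varepsilon'$, since the former jumps by $3$. You would therefore have to prove each stage for all step counts in a range. The paper avoids this by placing the switch inside the same window. On the compact set $\mathcal K_R$ visited by trajectories with $|Z|\le R$, the event of exactly one $i\to\hat k$ switch in the first half of the window and no other switch has probability at least $e^{-\q^*(\varepsilon+h_0)}\q_*\varepsilon/2$, uniformly in $h$. Your Step~1 bound for the no-switch event is fine.
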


\begin{lemma}\label{lemma_lyapunov_ebe}
   Let the hypotheses of Theorem~\ref{sebes_geom_erg_thrm} hold. Then, there exists a function $W: \mathbb{R}^{2d} \to [1, \infty)$ and constants $c_1 \in (0, 1)$, $c_2 < \infty$ and $h_0 \in (0,1)$ with $c_1 h_0< 1 $ such that $
    \mathbb{E} W(X_{k+1}, V_{k+1}|X_k ,V_k) \leq ( 1- c_1 h) W(X_k, V_k) + c_2 h
$, where $(X_{k+1}, V_{k+1})$ are computed using $\mathrm{[SEBES]}$, 
and thus the following holds:
\begin{equation}
\mathbb{E}W(X_{K_\varepsilon}, V_{K_\varepsilon}) \leq e^{-c_1 K_{\varepsilon} h} W(x, v) + \frac{c_2}{c_1},
\end{equation}
where $K_{\varepsilon} =\left\lceil \frac{\varepsilon}{h}\right\rceil$. 
\end{lemma}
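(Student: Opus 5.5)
We prove the one-step drift inequality for $\mathcal{V}(x,v)=a|x|^2+b|v|^2+c(x\cdot v)$ and then iterate it. The S-steps act only on the regime. Also, $W$ does not depend on the regime, and Assumption~\ref{assump_2_ergo} holds uniformly in $i$. So it suffices to bound $\mathbb{E}\,\mathcal{V}$ along the composition $\mathscr{S}^{\mathrm E}_{h/2}\circ\mathscr{S}^{\mathrm B}_h(\cdot;\s_1)\circ\mathscr{S}^{\mathrm E}_{h/2}$ with $\s_1$ fixed, provided all constants are uniform in $\s_1$.

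\emph{E half-step.} We condition on $N_{h/2}$ and use the Cholesky representation \eqref{E_update_swi_HMC}. With probability $e^{-\lambda h/2}$ there is no jump, and then $(x,v)\mapsto(x+\tfrac h2 v,v)$. With probability $\tfrac{\lambda h}{2}+\mathcal O(h^2)$ there is one jump, and then $v\mapsto \cos\phi\, v+\sin\phi\,\xi$ while $x$ moves by $\mathcal O(h)(|v|+|\xi|)$. Configurations with two or more jumps have probability $\mathcal O(h^2)$; their contribution is bounded by $C h^2 W$ using Gaussian moments. Hence
\[
\mathbb{E}\,\mathcal{V}(\mathscr{S}^{\mathrm E}_{h/2}(x,v))=\mathcal{V}+\tfrac h2\bigl(2a(x\cdot v)+c|v|^2\bigr)+\tfrac{\lambda h}{2}\bigl[(\cos^2\phi-1)b|v|^2+(\cos\phi-1)c(x\cdot v)+b\sin^2\phi\, d\bigr]+\mathcal O(h^2)W.
\]

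\emph{B step.} This is deterministic: $v\mapsto v-h\nabla U(x;\s_1)$. Using Assumption~\ref{assump_2_ergo} in the form $-(x\cdot\nabla U)\le-\kappa|x|^2+\ell$, the Lipschitz bound $|\nabla U(x)|\le L|x|+C$ from Assumption~\ref{assump_1_well_posed}, and Young's inequality, we get
\[
\mathcal{V}\mapsto \mathcal{V}+h\bigl[-2b(v\cdot\nabla U)-c\kappa|x|^2+c\ell\bigr]+\mathcal O(h^2)W .
\]
Composing the three pieces, the first-order generator-type term is
\[
h\bigl[(c-\lambda b\sin^2\phi)|v|^2+\bigl(2a-\lambda c(1-\cos\phi)\bigr)(x\cdot v)-2b(v\cdot\nabla U)-c\kappa|x|^2\bigr]+h\,C .
\]
It is the same as the continuous RHMC computation in \cite{bou2017randomized}. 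The cross terms evaluated at shifted points differ from this by $\mathcal O(h^2)W$, using Lipschitz continuity of $\nabla U$.

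\emph{Choice of constants.} This is the main obstacle. We need $a,b,c$ with $ab>c^2/4$ such that the quadratic form above is bounded by $-c_1'\mathcal{V}+C$. First, set $a=\lambda c(1-\cos\phi)/2$, which kills the $(x\cdot v)$ term. Next, bound $2b|v||\nabla U|\le 2bL|x||v|+\dots$ by Young's inequality, splitting it between the $|v|^2$ coefficient $\lambda b\sin^2\phi-c$ and the $|x|^2$ coefficient $c\kappa$. This requires $b^2L^2<(\lambda b\sin^2\phi-c)\,c\kappa$ up to constants, together with positive definiteness $ab>c^2/4$. Taking $c=\theta b$ and optimizing $\theta$ turns these into the two lower bounds on $\lambda$ in \eqref{condi_swi_hmc_lambda}: the $\sin^2\phi$ term comes from the velocity dissipation and the $(1-\cos\phi)$ term from $ab>c^2/4$. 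I would first try $\theta=\sqrt{2\kappa}/L$, or a similar value, and check both inequalities.

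\emph{Conclusion.} With these constants, $\mathbb{E}W(X_{k+1},V_{k+1})\le(1-2c_1h+Ch^2)W+c_2h$. Choosing $h_0$ with $Ch_0\le c_1$ gives the claimed one-step bound. Iterating by the tower property gives
\[
\mathbb{E}W_{K}\le(1-c_1h)^{K}W+c_2h\sum_j(1-c_1h)^j\le e^{-c_1Kh}W+c_2/c_1,
\]
using $1-c_1h\le e^{-c_1h}$.
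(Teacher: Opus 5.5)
Your proposal is correct and follows essentially the paper's proof. The paper likewise removes the $\mathrm{S}$-steps by conditioning on $\s_1$ and computes the one-step conditional expectation of $\mathcal V$. It does this through exact moments of the E-step coefficients $A_i$ and $B_i=\cos^{n_i}\phi$ rather than your $0/1/{\ge}2$-jump expansion. It arrives at the same first-order coefficients and then takes $b=1$, $a=\tfrac{c}{2}\lambda(1-\cos\phi)$ to kill the $(x\cdot v)$ term, exactly as you do.

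The one thing to correct is your trial value $\theta=\sqrt{2\kappa}/L$. It scales like $\sqrt{\kappa}/L$ instead of $L/\sqrt{\kappa}$ and fails in general. Instead take any $c$ in $\bigl(\sqrt2L/\sqrt\kappa,\ \min\{\lambda\sin^2\phi-\sqrt2L/\sqrt\kappa,\ 2\lambda(1-\cos\phi)\}\bigr)$. This interval is nonempty precisely by \eqref{condi_swi_hmc_lambda}, and the paper takes its midpoint $r$. Any such $c$ gives both $(\lambda\sin^2\phi-c)\,c\,\kappa>2L^2$ and $ab>c^2/4$.
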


The next result is establishing the order of convergence of the proposed schemes when employed to compute ergodic averages. To this end, we need some notation. For functions on $\mathbb R^{2d}\times\mathcal M$, define
\begin{align*}
L^\infty_\cV
:= \left\{ \varphi : \sup_{(x,v,i)} \frac{|\varphi(x,v,i)|}{1+\cV(x,v)} < \infty \right\},\; L^{1,\cV} := \left\{ g : \sum_{i\in\mathcal M}
\int_{\mathbb R^{2d}} |g(x,v,i)|(1+\cV(x,v))\,\der x\,\der v < \infty \right\}.
\end{align*}
We denote the associated norms by $|\varphi|_{L^\infty_\cV}$ and $ |g|_{L^{1,\cV}}$, respectively.
Let $f\in C_c^\infty(\mathbb R^{2d} \times \mathcal{M}) $. For fixed $(x,v,i)$, the operators $\rE$ and $\rB$,  corresponding to $\mathscr{S}^{\rE}$ and $\mathscr{S}^{\rB}$ steps, act on the $(x,v)$ variables only as
\begin{align*}
(\rB f)(x,v,i)
&=
-(\nabla U(x;i)\cdot \nabla_v f(x,v,i)),\\
(\rE f)(x,v,i)
&=
(v\cdot \nabla_x f(x,v,i))
+\lambda\big(\mathcal R f(x,v,i)-f(x,v,i)\big),
\end{align*}
where the refreshment operator $\mathcal R$ is $(\mathcal R f)(x,v,i)
=
\int_{\mathbb R^d}
f\big(x,\,v\cos\phi+\sin\phi\,z,\,i\big) G(z)\der z$
with $G$ being the standard Gaussian density on $\mathbb R^d$. Likewise, the operator corresponding to switching chain is
\begin{equation}
(\rS f)(x,v,i)
=
-\q_i(x)f(x,v,i)+\sum_{j\neq i} \q_{ij}(x) f(x,v,j).
\end{equation}
Since $\rE$ contains a nonlocal refreshment term, we interpret $[\rE,\rB]$ as the commutator of linear operators rather than as a Lie bracket of vector fields. Namely, on the common test space $C_c^\infty(\mathbb R^{2d})$,
\begin{equation}
[\rE,\rB]f := \rE(\rB f)-\rB(\rE f).
\end{equation}
More generally, all nested brackets appearing below are understood in this operator-theoretic sense.

We can write the corresponding Markov semigroups as
\begin{align*}
(P^{\mathrm{E}}_tf)(x,v) = \sum_{n=0}^\infty e^{-\lambda t} \frac{(\lambda t)^n}{n!} \mathbb{E} \big[ f\big( \mathscr{S}_t^{\mathrm{E},(n)}(x, v) \big) \mid N_t = n \big], \quad 
(P^{\mathrm B}_t f)(x,v) = f\big(x, v-t\nabla U(x)\big),
\end{align*}
and 
\begin{equation}
(P^{\rS}_t f)(x,v,i) = \sum_{j\in\mathcal M} (\exp{(tQ(x))})_{ij} f(x,v,j).
\end{equation}

To quantify the discretization bias in estimating $\mu(\varphi)$, where $\varphi\in L_\cV^\infty$,  we use the following discrete Poisson equation  associated with the geometrically ergodic Markov chain on $\mathbb R^{2d}\times\mathcal{M}$ whose one-step transition operator we denote with $\mathcal{P}_h$:
\begin{align}\label{dis_Poisson_equation_SEBES}
\Big(\frac{\mathcal{P}_h-I}{h}\Big)\Psi
=
\varphi-\mu_h(\varphi).
\end{align}
This equation admits a solution \cite{glynn1996liapounov}, i.e. $\Psi
=
-h\sum_{k=0}^\infty (\mathcal{P}_h)^k\big(\varphi-\mu_h(\varphi)\big)$,
and due to uniform in $h$ geometric ergodicity result (i.e. Theorem~\ref{sebes_geom_erg_thrm}), we also have the following bound:
\begin{align}\label{swi_hmc_poiss_bound}
    |\Psi(x,v,i)|\le \frac{hK}{1- e^{-\gamma h}}\,|\varphi|_{L_\cV^\infty}\,(1+\cV(x,v)),
\end{align}
where $K$ and $\gamma$ are independent of $h$. 
We impose the following assumption.
\begin{assumption}\label{swi_decay_assump}
Let $U(x; i) \in C^{8}(\mathbb{R}^{d})$ for every $i \in \mathcal{M}$. Let $\q_{ij} \in C^{8}(\mathbb{R}^{d})$.     We assume that for every polynomial $\mathrm{Poly}(x,v)$ that can arise from derivatives of $U$, $|v|^2$, and $q_{ij}$ up to order $8$, the following holds:
$$\sum_{i\in\mathcal{M}} \int_{\mathbb{R}^{2d}} |\mathrm{Poly}(x,v)| \, (1+\mathcal{V}(x,v)) \, e^{-U(x;i)-|v|^2/2} \der x \der v < \infty.$$
\end{assumption}

We say that $\rE^*$ and $\rB^*$ are the formal adjoints of $\rE$ and $\rB$ with respect to the Lebesgue pairing on $\mathbb R^{2d}\times\mathcal M$ if, for every
$f\in C_c^\infty(\mathbb R^{2d}\times\mathcal M)$ and every smooth function $g$ for which the pairings below are absolutely integrable and integration by parts produces no boundary terms, we have
\begin{align}
\sum_{i\in\mathcal M}\int_{\mathbb R^{2d}} (\rE f)(x,v,i)\,g(x,v,i)\,\der x\der v
&=
\sum_{i\in\mathcal M}\int_{\mathbb R^{2d}} f(x,v,i)\,(\rE^* g)(x,v,i)\,\der x\der v,
\\
\sum_{i\in\mathcal M}\int_{\mathbb R^{2d}} (\rB f)(x,v,i)\,g(x,v,i)\,\der x\der v
&=
\sum_{i\in\mathcal M}\int_{\mathbb R^{2d}} f(x,v,i)\,(\rB^* g)(x,v,i)\,\der x\der v.
\end{align}
In particular, this applies to the functions $g$ arising later in the proof of Theorem~\ref{sebes_error_expansion_thrm}, such as $\rho$, $\rE^*\rho$, $\rB^*\rho$, and their finite compositions, since by Assumption~\ref{swi_decay_assump} these functions are smooth and have sufficiently fast decay.

\begin{theorem}[Error representation for $\mathrm{[SEBES]}$]\label{sebes_error_expansion_thrm}
Let Assumptions~\ref{assump_1_well_posed}-\ref{swi_decay_assump} hold. Let condition \eqref{condi_swi_hmc_lambda} be satisfied. Let $\mu_h$ denote the invariant measure of the Markov chain generated by the $\mathrm{[SEBES]}$ scheme. Then for every $\varphi \in L^{\infty}_{\cV}$ and sufficiently small $h$, we have
\begin{align}
\mu(\varphi)-\mu_h(\varphi)
=
h^2
\sum_{i\in\mathcal M}\int_{\mathbb R^{2d}}
\Psi(x,v,i)(\mathcal{S}^*\rho)(x,v,i)\der x\der v
+\mathcal{O}(h^3),
\end{align}
where $\Psi$ is a solution of the discrete Poisson equation \eqref{dis_Poisson_equation_SEBES} satisfying bound in \eqref{swi_hmc_poiss_bound}, and
\begin{align*}
\mathcal S^* &= \frac1{12}[\rB^*,[\rB^*,\rE^*]]
-\frac1{24}[\rE^*,[\rE^*,\rB^*]]
+\frac1{12}[\rB^*+\rE^*,[\rB^*+\rE^*,\rS^*]]
-\frac{1}{24}[\rS^*,[\rS^*,\rB^*+\rE^*]].
\end{align*}
\end{theorem}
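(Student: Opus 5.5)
The plan is to integrate the discrete Poisson equation \eqref{dis_Poisson_equation_SEBES} against the exact invariant density $\rho$ rather than against $\mu_h$. Since $\mu(\varphi)-\mu_h(\varphi)=\sum_i\int (\varphi-\mu_h(\varphi))\rho$, we get
\begin{align*}
\mu(\varphi)-\mu_h(\varphi)=\frac1h\sum_{i\in\mathcal M}\int_{\mathbb R^{2d}}\big((\mathcal P_h-I)\Psi\big)\rho\,\der x\der v
=\frac1h\sum_{i\in\mathcal M}\int_{\mathbb R^{2d}}\Psi\,\big((\mathcal P_h^*-I)\rho\big)\,\der x\der v ,
\end{align*}
where $\mathcal P_h^*$ is the formal adjoint of the one-step operator. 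The task is then to show $(\mathcal P_h^*-I)\rho=h^3\mathcal S^*\rho+\mathcal O(h^4)$ in the weighted sense. Integrability is controlled by Assumption~\ref{swi_decay_assump}, and $|\Psi|\le Ch^{-1}\cdots$ is controlled by \eqref{swi_hmc_poiss_bound}: since $hK/(1-e^{-\gamma h})=\mathcal O(1)$, $\Psi$ lies in $L^\infty_\cV$ uniformly in $h$. Together these give the $h^2$ leading term and an $\mathcal O(h^3)$ remainder. I would first justify the duality step for $\Psi\in L^\infty_\cV$ and $\rho$-type functions in $L^{1,\cV}$ by a density/truncation argument. The Lyapunov bound of Lemma~\ref{lemma_lyapunov_ebe} is used to make $\mathcal P_h\Psi$ integrable against $(1+\cV)\rho$.

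Next comes the expansion of the scheme's operator. For $\mathrm{[SEBES]}$, $\mathcal P_h=P^{\rS}_{h/2}P^{\rE}_{h/2}P^{\rB}_hP^{\rE}_{h/2}P^{\rS}_{h/2}$ in the appropriate order, with adjoint $\mathcal P_h^*=P^{\rS*}_{h/2}P^{\rE*}_{h/2}P^{\rB*}_hP^{\rE*}_{h/2}P^{\rS*}_{h/2}$. Each factor is exact for its subdynamics: uniformization and Gillespie are exact in law, so $P^{\rS}_t$ is exactly $\exp(t\rS)$. I would expand each adjoint semigroup by Taylor's formula with integral remainder, $P^{*}_t g=\sum_{k\le 3}\frac{t^k}{k!}(\cdot^*)^kg+R_4$. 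Applied to $\rho$ and its derivatives, the remainders stay in $L^{1,\cV}$ thanks to the $C^8$ regularity and the decay assumption; the nonlocal refreshment term is handled because $\mathcal R^*$ maps Gaussian-weighted polynomials to Gaussian-weighted polynomials. Multiplying out, the symmetric Strang structure applied twice (inner $\rE\rB\rE$, outer $\rS(\rE+\rB)\rS$) gives the standard BCH symmetric expansion
\begin{align*}
\mathcal P_h^*=\exp\big(h(\rE^*+\rB^*+\rS^*)+h^3\mathcal S^*+\mathcal O(h^5)\big),
\end{align*}
with $\mathcal S^*$ built from the double commutators in the statement. The coefficients $\tfrac1{12}$ and $-\tfrac1{24}$ come from the symmetric BCH formula $e^{A/2}e^{B}e^{A/2}=\exp(A+B-\tfrac1{24}[A,[A,B]]+\tfrac1{12}[B,[B,A]]+\dots)$, applied with $(A,B)=(\rE^*,\rB^*)$ and then $(A,B)=(\rS^*,\rB^*+\rE^*)$ at the truncated level. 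I would do this as an honest finite Taylor expansion to order $h^3$ rather than a formal BCH series, collecting the terms.

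Then I use invariance: $\mathcal A^*\rho=(\rE^*+\rB^*+\rS^*)\rho=0$ by construction of $\rho$ and \eqref{swi_eqn_2.9}. This kills the first- and second-order terms, including $(\rE^*+\rB^*+\rS^*)^2\rho$ and the $h^2$ cross terms, which vanish by the symmetry of the composition. The same fact reduces the $h^3$ term to $\mathcal S^*\rho$, because all pure powers of $\mathcal A^*$ annihilate $\rho$. Hence $(\mathcal P_h^*-I)\rho=h^3\mathcal S^*\rho+h^4 r_h$ with $|r_h|_{L^{1,\cV}}$ bounded uniformly in $h$. Substituting gives $\mu(\varphi)-\mu_h(\varphi)=h^2\sum_i\int\Psi\,\mathcal S^*\rho+\mathcal O(h^3)$, with the remainder bounded by $|\Psi|_{L^\infty_\cV}|r_h|_{L^{1,\cV}}h^3$.

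The main obstacle is the rigorous bookkeeping of the order-$h^3$ expansion together with uniform $L^{1,\cV}$ control of the order-4 remainders. Two points make this delicate: $\rE^*$ is nonlocal, and the switching rates $\q_{ij}(x)$ depend on $x$, so $\rS^*$ does not commute with $\rE^*$ (which moves $x$). The commutators therefore produce derivatives of $\q_{ij}$, which is exactly why Assumption~\ref{swi_decay_assump} includes them up to order $8$. I would first verify the algebraic identity on $C_c^\infty$ test functions, where all operators compose cleanly, and then transfer it to $\rho$ using the decay assumption.
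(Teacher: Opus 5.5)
Your proposal is correct and follows essentially the same route as the paper. You integrate the discrete Poisson equation against $\rho$, and justify passing to $\mathcal{P}_h^*$ via the Lyapunov bound and a truncation/mollification approximation of $\Psi$ by $C_c^\infty$ functions. You then Taylor-expand each factor of $P^{\rS}_{h/2}P^{\rE}_{h/2}P^{\rB}_hP^{\rE}_{h/2}P^{\rS}_{h/2}$ to third order with integral remainders, use $(\mathcal{A}^*)^j\rho=0$ to isolate $h^3\mathcal{S}^*\rho$, and bound the $\mathcal{O}(h^4)$ remainders in $L^{1,\cV}$ against the $h$-uniform bound on $|\Psi|_{L^\infty_\cV}$. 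Your nested symmetric-BCH bookkeeping reproduces exactly the coefficients $\tfrac{1}{12}$ and $-\tfrac{1}{24}$ in the statement's $\mathcal{S}^*$; the forward $\mathcal{S}$ in the paper's proof has $\tfrac12$ in place of $\tfrac1{12}$, which is a typo.
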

For two probability measures $\nu_1,\nu_2$ on $\mathbb{R}^{2d}\times\mathcal{M}$,
define the weighted total variation distance associated with $\mathcal{V}$ by
\begin{align}
\|\nu_1-\nu_2\|_{\mathcal{V}}
:= \sup_{\|\varphi\|_{L^\infty_{\mathcal V}}\leq 1}
\Big|
\int \varphi\,\der\nu_1-\int \varphi\,\der\nu_2
\Big|.
\end{align}
We also define the usual total variation distance and the bounded-Lipschitz distance  by
\begin{align*}
\|\nu_1-\nu_2\|_{\mathrm{TV}}
:=
\sup_{\|\varphi\|_\infty\leq 1}
\big| \textstyle
\int \varphi\,\der\nu_1-\int \varphi\,\der\nu_2
\big|\; \text{ and }\;
d_{\mathrm{BL}}(\nu_1,\nu_2)
:=
\sup\limits_{\|\varphi\|_\infty\leq 1,\; \mathrm{Lip}(\varphi)\leq 1}
\left|
\int \varphi\,\der\nu_1-\int \varphi\,\der\nu_2
\right|,
\end{align*}
respectively.
\begin{corollary}
\label{cor_sebes_convergence}
Let the assumptions of Theorem~\ref{sebes_error_expansion_thrm} hold. Then there exist constants
$C>0$ and $h_0\in(0,1)$, independent of $h$, such that for all
$0<h\leq h_0$,
\begin{equation}
\|\mu-\mu_h\|_{\mathcal{V}}
\leq C h^2.
\end{equation}
Consequently, $
\|\mu-\mu_h\|_{\mathrm{TV}}
\leq C h^2$
and $d_{\mathrm{BL}}(\mu,\mu_h)
\leq C h^2$. 
Using Kantorovich-Rubinstein duality, we also have
\begin{align}
\mathcal{W}_1(\mu,\mu_h)\leq C h^2,
\end{align}
where $\mathcal{W}_1$  represents $1$-Wasserstein distance. 
\end{corollary}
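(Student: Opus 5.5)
The corollary should follow from Theorem~\ref{sebes_error_expansion_thrm} once the leading term and the remainder are bounded uniformly over $\varphi$ in the unit ball of $L^\infty_{\mathcal V}$. So the plan is to get an $h$-independent, $\varphi$-uniform bound on each of the two pieces of the error representation, and then transfer the $\|\cdot\|_{\mathcal V}$ bound to the other metrics.

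\textbf{Leading term.} Fix $\varphi$ with $|\varphi|_{L^\infty_{\mathcal V}}\le 1$. By \eqref{swi_hmc_poiss_bound},
\[
|\Psi(x,v,i)|\le \frac{hK}{1-e^{-\gamma h}}\,(1+\mathcal V(x,v)).
\]
Since $1-e^{-\gamma h}\ge \gamma h e^{-\gamma h_0}$ for $h\le h_0$, the prefactor is bounded by $K e^{\gamma h_0}/\gamma$, which does not depend on $h$ or $\varphi$. Hence
\[
h^2\Big|\sum_i\int \Psi\,\mathcal S^*\rho\Big|\le h^2\,\frac{Ke^{\gamma h_0}}{\gamma}\,|\mathcal S^*\rho|_{L^{1,\mathcal V}}.
\]
The last norm is finite by Assumption~\ref{swi_decay_assump}. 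Indeed, $\mathcal S^*\rho$ is a finite sum of terms of the form $\mathrm{Poly}(x,v)\,e^{-U(x;i)-|v|^2/2}$, with polynomial factors built from derivatives of $U$ and $\q_{ij}$ up to order at most $8$. It also contains terms involving $\mathcal R^*$. Since $\mathcal R^*$ maps Gaussian-weighted polynomials in $v$ to Gaussian-weighted polynomials in $v$, these terms are covered as well.

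\textbf{Remainder.} I need to check that the $\mathcal O(h^3)$ term in Theorem~\ref{sebes_error_expansion_thrm} is uniform over the unit ball of $L^\infty_{\mathcal V}$. This is the main obstacle, because the theorem is stated for each fixed $\varphi$. To handle it, I will go back into the proof of that theorem. The remainder has the form $h^{3}\sum_i\int \Psi\, R_h\rho$, with $R_h$ a Taylor remainder of the composed semigroups, plus the analogous terms coming from the finite-order expansion of the matrix exponential in the S-step. Each of these terms depends on $\varphi$ only through $\Psi$, so its size is controlled by
\[
\frac{|\Psi|_{L^\infty_{\mathcal V}}}{h}\;\cdot\;\sup_{h\le h_0}|R_h\rho|_{L^{1,\mathcal V}}.
\]
The first factor is bounded uniformly as above. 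The second factor depends only on $U$, $\q_{ij}$ and $\rho$, again by Assumption~\ref{swi_decay_assump}. Taking the supremum over $\varphi$ then gives $\|\mu-\mu_h\|_{\mathcal V}\le Ch^2$.

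\textbf{Consequences.} Since $1+\mathcal V\ge 1$, every $\varphi$ with $\|\varphi\|_\infty\le1$ satisfies $|\varphi|_{L^\infty_{\mathcal V}}\le1$. This gives the TV bound, and the bounded-Lipschitz bound follows since its test class is a subset of the TV class. For $\mathcal W_1$, Kantorovich--Rubinstein duality takes the supremum over $1$-Lipschitz $\varphi$ with $\varphi(0,0,\cdot)=0$, using a metric on $\mathbb R^{2d}\times\mathcal M$ that adds a discrete metric on $\mathcal M$. For such $\varphi$ we have $|\varphi|\le |x|+|v|+1\le C'(1+\mathcal V)$, because $ab>c^2/4$ makes $\mathcal V$ coercive of quadratic order. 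Therefore $|\varphi|_{L^\infty_{\mathcal V}}\le C'$, and $\mathcal W_1(\mu,\mu_h)\le C'\|\mu-\mu_h\|_{\mathcal V}\le CC'h^2$.
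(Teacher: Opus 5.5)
Your proposal is correct and follows the paper's route. The paper's one-line proof rests on the same $h$- and $\varphi$-uniform Poisson bound \eqref{swi_hmc_poiss_bound}, together with the $\varphi$-independent facts $\mathcal S^*\rho\in L^{1,\cV}$ and $|\rR_j^*\rho|_{L^{1,\cV}}\le Ch^4$ from the proof of Theorem~\ref{sebes_error_expansion_thrm}; you make this uniformity explicit and fill in the TV, BL and $\mathcal W_1$ transfers.

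The only slip is bookkeeping in the remainder. It equals $\frac1h\sum_{i}\sum_{j=1,2}\int\Psi\,\rR_j^*\rho$, so the bound should read $\frac1h|\Psi|_{L^\infty_\cV}\cdot Ch^4$. Taking $\sup_{h\le h_0}|R_h\rho|_{L^{1,\cV}}$ as you wrote would discard the $h^4$ factor.
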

The proof of above corollary follows from Theorem~\ref{sebes_error_expansion_thrm} since we have uniform in $h$ bound on the solution of discrete Poisson equation.

One approach to obtain  weak error expansions/bounds for ergodic estimates is to use the solution to a parabolic PDE:
\begin{equation*}
    \frac{\partial u}{\partial t} = \mathcal{L}u, \quad u(0, z) = \varphi(z),
\end{equation*}
where $\mathcal{L}$ is the operator associated with the underlying Markov process. This technique, based on backward Kolmogorov equation, has been employed in \cite{talay_tubaro_90,talayinv,debussche2012weak,abdulle2014ergo, tretyakov2025sampling} for hypoelliptic and elliptic stochastic differential equations assuming $U \in C^{\infty}$. This approach requires that the derivatives of the semigroup corresponding to SDE decay exponentially in time.   There is one other approach based on Poisson partial differential equation. It has been employed in \cite{mattingly_stuart_tretyakov, sharma2021,leimkuhler2023simplerandom, bharath2025sampling}. To get weak error bounds using this methodology, we require Schauder estimates on the solution of  Poisson equation. 

It is clear that both techniques,  which have been used for weak error analysis of numerical schemes to compute ergodic averages,  require certain types of bounds on the solutions of parabolic or elliptic PDEs.  The paper \cite{tretyakov2025sampling} investigating switching Langevin algorithm  conveniently relies on \cite{talayinv} for such bounds. In our case, we have a first order partial integro-differential equation (PIDE). We would need such bounds if we were to follow the above mentioned approaches for error analysis. Since we do not have elliptic or hypoelliptic diffusion in the dynamics, we are not aware of any such results  on the derivatives of the solution of corresponding first order PIDE. Therefore, we devise an approach based on discrete Poisson equation associated with the numerical scheme itself to obtain the error bounds. Discrete Poisson equation has been used as a tool for studying additive functionals, especially when proving limit theorems (see
\cite{benveniste2012adaptive,  nummelin1991poisson} and \cite[Chapter~17]{meyn2012markov}), and for convergence of stochastic approximation and reinforcement learning algorithms \cite{borkar2025ode, nanda2025minimal,chandak2022concentration}. One important consequence of the presented approach is that we get convergence in weighted total variation norm and $1$-Wasserstein distance (see Corollary~\ref{cor_sebes_convergence}) which is not straightforward in PDEs based approaches. 

For the discussion that follows let $\mathcal{L}$ denote the generator of kinetic Langevin dynamics. The error analysis for the numerical schemes of kinetic Langevin equation in \cite{lemakstoltz2016} also uses the operator $(\mathcal{P}_{h}- I)/h$ where $\mathcal{P}_h$ is the one-step Markov operator of the numerical scheme. However, the analysis  still relies on two Poisson PDEs : one involving generator $\mathcal{L}$ and other with its $L^2$ adjoint $\mathcal{L}^*$. In \cite{lu_xuda2025mean}, discrete Poisson equation with operator $(e^{\mathcal{L}h} - I)/h$ has been employed to obtain non-asymptotic bounds for  numerical schemes  approximating kinetic Langevin equation but assuming strongly convex $U$. In contrast, we  use  discrete Poisson equation associated with numerical scheme itself i.e.  we have $\mathcal{P}_h^{\EBE} = e^{h\mathrm{E}/2}e^{h\mathrm{B}}e^{h\mathrm{E}/2} $ and similarly for $\mathcal{P}_h^{\mathrm{SEBES}}$ to analyze the error. We leave it for future work to obtain non-asymptotic bounds on error incurred in invariant measure approximation  using numerical integrators for randomized HMC~\eqref{eqn_Poishmc}, switching HMC~\eqref{eqn_switching_Poishmc} or kinetic Langevin equation  with strongly convex potential $U$ (or $U_i$ in case with switching) using the discrete Poisson equation based approach proposed in this paper.

\section{Proofs}

\subsection{Proof of Theorem~\ref{sebes_geom_erg_thrm}}

\begin{proof}[Proof of Lemma~\ref{lemma_uniform_minor_sebes}]

For $0< h \leq h_0$, set $
t^{\varepsilon, h} :=K_\varepsilon h$. Then $\varepsilon\leq t^{\varepsilon, h} <\varepsilon+ h$.  We will first prove minorization for $[\EBE]$ scheme. Then, we will provide arguments for minorization for $[\mathrm{SEBES}]$. We choose $h_0 \leq \varepsilon/12$. We consider the intervals
\begin{equation}
I_1:=\left[\frac{\varepsilon}{6},\frac{2\varepsilon}{6}\right],
\qquad
I_2:=\left[\frac{4\varepsilon}{6},\frac{5\varepsilon}{6}\right]
\end{equation}
which are contained in $[0,t^{\varepsilon, h}]$, and for $
\tau_1\in I_1$, $\tau_2\in I_2$, 
we have 
\begin{equation}
\tau_2-\tau_1\geq \frac{\varepsilon}{3}.
\end{equation}
Let $\mathcal E_{\varepsilon,h}$ be the event that there is one refreshment
in \(I_1\), one refreshment in $I_2$, and no other refreshment in the interval
$[0,t^{\varepsilon, h}]$. Due to the independence of increments of the Poisson process, we have
\begin{equation}
\mathbb P(\mathcal E_{\varepsilon,h})
=
e^{-\lambda t^{\varepsilon, h}}
\left(\lambda\frac{\varepsilon}{6}\right)^2 \geq
e^{-\lambda(\varepsilon+ h_0)}
\left(\lambda\frac{\varepsilon}{6}\right)^2
=:p_\varepsilon>0.
\end{equation}
  Conditioned on $\mathcal E_{\varepsilon,h}$, consider the two refreshment times $
\tau_1\in I_1$ and $ \tau_2\in I_2$ and let $Z_1,Z_2\in\mathbb{R}^d$ be the corresponding independent standard Gaussian variables.
For fixed $(x_0,v_0)$, $\tau_1,\tau_2$, and $h$, let us write the final state of $\EBE$ recursion as a function of $(Z_1, Z_2)$ 
\begin{equation}
(x_{K_\varepsilon},v_{K_\varepsilon})
=
\Phi(x_0,v_0,\tau_1,\tau_2;Z_1,Z_2).
\end{equation}
We similarly denote by
\begin{equation}
\Phi^{0}(x_0,v_0,\tau_1,\tau_2;Z_1,Z_2)
\end{equation}
the corresponding zero force  map i.e. the final state of $\EBE$ recursion with $U\equiv 0$ while keeping the same initial condition, the same refreshment times, and the same $(Z_1,Z_2)$. The idea of treating $\Phi$ as a perturbation of $\Phi_0$ comes from \cite{lemakstoltz2016}. The major difference is that we do not assume positional space to be compact. 
However, the fundamental idea remains the same, like in several other minorization proofs, that is to find bounds on the sensitivity of $\Phi$ (see \eqref{swi_hmc_det_lowb_Phi} and \eqref{swi_hmc_new_eqn_5.10} below) with respect to noise variable so that change of variable formula can be applied appropriately to obtain required lower bound for minorization.

The proof rests on the following two claims which we prove below:
\begin{itemize}[leftmargin=*, labelwidth=4.2em, labelsep=0.5em, align=left]
    \item[Claim 1:] For a $z_* \in \mathbb{R}^{2d}$, the equation $\Phi^0(x_0, v_0, \tau_1, \tau_2 ; Z^0) =z_* $ has a unique solution $Z^0 = (Z_1^0, Z_2^0) $. Moreover,  there exists an $R_0$ satisfying $|Z^0| \leq R_0$  uniformly over  $(x_0,v_0) \in \mathcal{C}$, $\tau_1,\tau_2$, and
$0<h\leq h_0$. More precisely, $R_0 = C_0/\varepsilon  $ with $C_0> 0$ being independent of $h$.  
   
\item[Claim 2]   Let $R:=2R_0$. We  choose $\varepsilon>0$ sufficiently small so that there exists a constant  $b_0>0$ such that
\begin{equation}\label{swi_hmc_det_lowb_Phi}
\left| \det D_Z\Phi(x_0,v_0,\tau_1,\tau_2;Z)\right| \geq b_0
\end{equation}
for all $(x_0,v_0)\in \mathcal{C}$, $\tau_1\in I_1$, $\tau_2\in I_2$, $|Z|\leq R$, $0< h \leq h_0$ with $Z = (Z_1, Z_2)$.
Moreover, $D_Z\Phi$ satisfies
\begin{equation}\label{swi_hmc_new_eqn_5.10}
\|D_Z\Phi(x_0,v_0,\tau_1,\tau_2;Z)
- J^{0}\| \leq C(\varepsilon+h_0)^2
\end{equation}
on $\mathcal C\times I_1\times I_2
\times \{ Z=(Z_1,Z_2)\in\mathbb R^{2d}:
|Z|\leq R \} $ and $J^{0}$ is from \eqref{swi_eqn_J_0}.
\end{itemize}
\textit{Proof of Claim 1:} For fixed $(x_0,v_0)$, \(\tau_1,\tau_2\), and $h$, the $\Phi^0$ is affine
in the two Gaussian variables, i.e.
\begin{equation}
\Phi^{0}(x_0,v_0,\tau_1,\tau_2;Z_1,Z_2)
=
M^{0}(x_0,v_0,\tau_1,\tau_2)
+
J^{0}(\tau_1,\tau_2)(Z_1, Z_2)^{\top},
\end{equation}
where 
\begin{equation}
M^0 = \begin{pmatrix}
x + v( \tau_1 + \cos(\phi)(\tau_2-\tau_1) + (\cos(\phi))^2(t^{\varepsilon,h}-\tau_2)) \\[4pt]
(\cos(\phi))^2v
\end{pmatrix},
\end{equation}
and
\begin{equation}\label{swi_eqn_J_0}
J^{0} =  \sin(\phi)
\begin{pmatrix}
\big((\tau_2-\tau_1)+\cos(\phi)(t^{\varepsilon, h}-\tau_2)\big)I_d & (t^{\varepsilon, h}-\tau_2)I_d
\\ \cos(\phi)I_d & I_d
\end{pmatrix}.
\end{equation}
Since $ \tau_2-\tau_1\geq \frac{\varepsilon}{3}$ and $\det J^{0}
=
\sin(\phi)^{2d}(\tau_2-\tau_1)^d$, we get
\begin{equation}
|\det J^{0}|
\ge
\sin(\phi)^{2d}\left(\frac{\varepsilon}{3}\right)^d>0.
\end{equation}
Moreover, since $
0\leq t^{\varepsilon, h} - \tau_2\le  t^{\varepsilon, h} \leq \varepsilon+ h_0 \leq \frac{13}{12}\varepsilon$, 
there exists $C_{\phi}<\infty$, independent of $h$, such that
\begin{equation} \label{upper_bound_J0-1_eqn}
    \|(J^{0})^{-1}\|\leq C_{\phi}/\varepsilon.
\end{equation}
 Since \(\mathcal{C}\) is compact,
there exists $M_{\max}<\infty$, independent of $h$, such that
$ |M^{0}(x_0,v_0,\tau_1,\tau_2)|\leq M_{\max}$. Therefore, there exists a constant $C_d >0$ so that
\begin{equation}\label{upper_bound_Z_0_claim_1}
|Z^0| \leq
C_d C_{\phi}\big(|z_*|+M_{\max}\big)/\varepsilon.
\end{equation}
\textit{Proof of Claim 2:} 
 Let $y_n =(x_n,v_n)^\top$ and $y_n^0=(x_n^0,v_n^0)^\top$ denote the nonzero-force and zero-force $\mathrm{EBE}$ iterates, respectively.
Conditioned on $\mathcal{E}_{\varepsilon, h}$ and two refreshment times $\tau_1$ and \(\tau_2\), two Gaussian variables denoted as $Z=(Z_1,Z_2)\in\mathbb R^{2d}$ enter the
recursion at the refreshment
times $\tau_1$ and $\tau_2$. We write the corresponding $\EBE$ recursion below:
\begin{align*}
    \hat y_{k-1/2} &= e_{k, 1} y_{k-1} +  G_{k,1}(Z_1, Z_{2}), \quad
    y_{k-1/2} = \hat{y}_{k-1/2} + h (0, -\nabla U(x_{k-1/2}))^{\top}, \\
y_{k} &= e_{k,2}y_{k-1/2} + G_{k,2}(Z_{1}, Z_2),
\end{align*}
where $e_{k,i}$, $i=1,2$ denote the deterministic linear part of E flow, and $G_{k,1}(\cdot, \cdot)$ and $G_{k,2}(\cdot, \cdot)$ depend on $Z_1, Z_2$. For all half-steps not containing $\tau_1$ or $\tau_2$, the corresponding $G_{k,i}$, $i=1,2$ is zero. Therefore, by repeatedly applying the recursion above, we can obtain the following  for every
$1\le n\le K_\varepsilon$:
\begin{align}\label{swi_hmc_ebe_recur_minor_proof}
y_n = y_n^0 - h\sum_{k=1}^n
P_{n,k} (0, \nabla U(x_{k-1/2}))^{\top},
\end{align}
where $P_{n,k} = w_n w_{n-1}\cdots w_{k+1} e_{k,2}$ with $w_j = e_{j,2}e_{j,1}$ and $P_{n,n} = e_{n,2}$, and $P_{n,k}$ does not depend on $y_k$, therefore,
there exists a constant $C$, independent of $h$, such that
\begin{align}
\|P_{n,k}\|\le C,
\qquad 1\leq k\leq n\leq K_\varepsilon .
\end{align}
We differentiate \eqref{swi_hmc_ebe_recur_minor_proof} with respect to
$Z=(Z_1,Z_2)$. Since $P_{n,k}$ is independent of $Z$ and
$U\in C^2$, we obtain
\begin{equation}
D_Z y_n = D_Z y_n^0 - h\sum_{k=1}^n P_{n,k}
\begin{pmatrix}
0\\
\nabla^2 U(x_{k-1/2})D_Zx_{k-1/2}
\end{pmatrix}.
\end{equation}
Let $\tilde{Y}_n:=D_Z y_n - D_Z y_n^0$
and define analogously $\tilde{Y}_{k-1/2}:= D_Z y_{k-1/2}-D_Z y_{k-1/2}^0$. Then
\begin{equation}
D_Zx_{k-1/2} = D_Zx_{k-1/2}^0+(\tilde{Y}_{k-1/2})_x,
\end{equation}
where $(\tilde{Y}_{k-1/2})_x$ denotes $x$-component of $\tilde{Y}_{k-1/2}$. Hence
\begin{align}\label{swi_hmc_sensitivity_eqn}
\tilde{Y}_n = - h\sum_{k=1}^n P_{n,k}
\begin{pmatrix}
0\\
\nabla^2 U(x_{k-1/2})
\bigl(D_Zx_{k-1/2}^0+(\tilde{Y}_{k-1/2})_x\bigr)
\end{pmatrix}.
\end{align}
For steps when $t_{k-1/2} \leq \tau_1$, the noise has not entered the iterates and therefore, $D_Z x_{k-1/2}^0 = 0$.  For $t_{k-1/2} > \tau_1$, the position sensitivity grows linearly in the remaining time, i.e. 
\begin{equation}\label{hmc_swi_eqn_5.21}
\|D_Z x_{k-1/2}^0\| \le C (t_{k-1/2} - \tau_1)_+,
\end{equation}
where  $C > 0$ is independent of $h$. Taking norms on both sides in \eqref{swi_hmc_sensitivity_eqn} and using \eqref{hmc_swi_eqn_5.21} yield
\begin{align*}
\|\tilde{Y}_n \| &\leq  C h\sum_{k=1}^n
\Big( (t_{k-1/2}-\tau_1)_+
+ \|\tilde{Y}_{k-1/2}\|\Big) 
\leq C h\sum_{k=1}^n
(t_{k-1/2}-\tau_1)_+
+
C h\sum_{k=1}^n
\max_{0\le j\le k}\|\tilde{Y}_j\|,
\end{align*}
where $C$ is independent of $h$. Since $h\sum_{k=1}^n (t_{k-1/2}-\tau_1)_+ \leq  C(t^{\varepsilon,h}-\tau_1)^2 \leq C(\varepsilon+h_0)^2 $, we obtain
\begin{align}
\max_{1 \leq k \leq n} \|\tilde{Y}_k \| \leq C (\varepsilon + h_0)^2 + C h\sum_{k=0}^{n-1} \max_{0\le j\le k}\|\tilde{Y}_j\|.
\end{align}
Using the discrete Gronwall lemma, we get $\max_{1 \leq k \leq n} \|\tilde{Y}_k \|
\leq C(\varepsilon+h_0)^2
e^{CK_\varepsilon h}$.
As we know $K_\varepsilon h=t^{\varepsilon,h}\le \varepsilon+h_0$, we ascertain
\begin{equation}\label{new_eqn_swi_hmc_5.20}
\|D_Z\Phi-J^0\| = \|\tilde{Y}_{K_\varepsilon}\| \leq C(\varepsilon+h_0)^2.
\end{equation}
It remains to obtain \eqref{swi_hmc_det_lowb_Phi}. We have already proved that $\|(J^0)^{-1}\|\leq \frac{C_\phi}{\varepsilon}$. Hence
\begin{align}
\|(J^0)^{-1}\|\,\|D_Z\Phi-J^0\|
\leq \frac{C_\phi}{\varepsilon}C(\varepsilon+h_0)^2.
\end{align}
Choosing $\varepsilon>0$ sufficiently small, we can ensure that
\begin{equation}\label{swi_hmc_eqn_new_5.24}
\|(J^0)^{-1}\|\,\|D_Z\Phi-J^0\|\leq \frac{1}{2} .
\end{equation}
We write $D_Z\Phi = J^0[I+(J^0)^{-1}(D_Z\Phi-J^0)]$ and thanks to \eqref{swi_hmc_eqn_new_5.24}, every singular value of $[I+(J^0)^{-1}(D_Z\Phi-J^0)]$ is bounded below by $1/2$. Therefore, $|\det D_Z\Phi| \geq 2^{-2d}|\det J^0|$ and using $|\det J^0| = \sin(\phi)^{2d}(\tau_2-\tau_1)^d \geq \sin(\phi)^{2d}\left(\frac{\varepsilon}{3}\right)^d$, we obtain
\begin{equation}
|\det D_Z\Phi|
\geq 2^{-2d}\sin(\phi)^{2d}
\left(\frac{\varepsilon}{3}\right)^d =:b_0>0.
\end{equation}
The constant $b_0$ is independent of $h$, $(x_0,v_0)$, $\tau_1$, $\tau_2$,
and $Z$ with $|Z|  \leq R$. This completes the proof of Claim~2.

Recall $R =2R_0$ and $R_0:= C_0/\varepsilon$. Let $r_1:=R_0$. Then,
$B(Z^0,r_1) \subset \{ Z=(Z_1,Z_2)\in\mathbb R^{2d}:\ |Z|\le R\}$. 
Consider the map $Z\mapsto \Phi(x_0,v_0,\tau_1,\tau_2;Z) $ on the ball $B(Z^0,r_1)$.  Due to \eqref{swi_hmc_det_lowb_Phi} and \eqref{new_eqn_swi_hmc_5.20}, there exists a constant $\hat{K}$ uniformly in
$(x_0,v_0)\in\mathcal C$, \(\tau_1\in I_1\), \(\tau_2\in I_2\), and
\(0<h\le h_0\), such that
\begin{equation}
\|(D_Z\Phi(x_0,v_0,\tau_1,\tau_2;Z^0))^{-1}\|
\leq \frac{\hat{K}}{\varepsilon}. \label{hmc_swi_eqn_5.28}
\end{equation}
Moreover, since \(J^0\) is independent of $Z$, for every
\(Z\in B(Z^0,r_1)\), using \eqref{swi_hmc_new_eqn_5.10}, we get
\begin{align}
&\|D_Z\Phi(x_0,v_0,\tau_1,\tau_2;Z)
- D_Z\Phi(x_0,v_0,\tau_1,\tau_2;Z^0)
\| \nonumber  \\ & \quad\leq \|D_Z\Phi(x_0,v_0,\tau_1,\tau_2;Z)-J^0\|
+ \|D_Z\Phi(x_0,v_0,\tau_1,\tau_2;Z^0)-J^0\| \leq
2C(\varepsilon+h_0)^2. \label{hmc_swi_eqn_5.29}
\end{align}
We choose  $\varepsilon>0$ and then $h_0>0$ sufficiently small so that the following bound holds:
\begin{equation}\label{swi_eqn_eqn_5.25}
2C(\varepsilon+h_0)^2\le \frac{\ell \varepsilon}{\hat{K}},
\qquad 0<\ell<1,
\end{equation}
where $C$ is from \eqref{hmc_swi_eqn_5.29} and $\hat{K}$ is from \eqref{hmc_swi_eqn_5.28}. Therefore, based on \eqref{hmc_swi_eqn_5.28} and  \eqref{hmc_swi_eqn_5.29} combined with \eqref{swi_eqn_eqn_5.25}, we can apply quantitative inverse function theorem \cite[Chapter~XIV, Theorem~1.2 and Lemma~1.3]{lang_real}  (also see \cite[Proposition~2.5.6]{abraham1988manifolds}) over all admissible parameters, to get
\begin{align}
B\left(\Phi(x_0,v_0,\tau_1,\tau_2;Z^0),r_0\right)
\subset
\Phi(x_0,v_0,\tau_1,\tau_2;B(Z^0,r_1)),
\end{align}
where $r_0:=(1-\ell)r_1 \varepsilon/\hat{K}>0$.  Since $r_1 = R_0 = C_0/\varepsilon$, we have $r_0 = (1- \ell) C_0 /\hat{K} > 0$. 

Although $|Z^0|\leq C_0/\varepsilon$, the chain $x^0_k$ evolves in a bounded domain, since the contribution of the refreshment variables $Z^0$ to the position is multiplied by time interval of
length $\mathcal{O}(\varepsilon)$. Then using \eqref{swi_hmc_ebe_recur_minor_proof},  using the  global Lipschitzness of $\nabla U$ and finally applying discrete Gronwall argument over the time interval $t^{\varepsilon , h} \leq \varepsilon + h_0$, it is not difficult to show that $x_k$ also belongs to a compact set which size is independent of $\varepsilon$.  Therefore, from \eqref{swi_hmc_ebe_recur_minor_proof}, we obtain 
\begin{align}
\left|
\Phi(x_0,v_0,\tau_1,\tau_2;Z^0)-z_*
\right|
&\leq h\sum_{k=1}^{K_\varepsilon}
\|P_{K_\varepsilon,k}\|\,
|\nabla U(x_{k-1/2})| \leq
C h K_{\varepsilon}.
\end{align}
 We choose sufficiently small $\varepsilon$ such that   \eqref{swi_eqn_eqn_5.25} holds as well as the following bound holds:
\begin{align}
\left|\Phi(x_0,v_0,\tau_1,\tau_2;Z^0)-z_*\right|
\leq C(\varepsilon+h_0) \leq \frac{r_0}{2}.
\end{align}
Thus, denoting $D:=B(z_*,\frac{r_0}{2})$,
we have $D\subset
B\left(\Phi(x_0,v_0,\tau_1,\tau_2;Z^0),r_0\right)
\subset
\Phi(x_0,v_0,\tau_1,\tau_2;B(Z^0,r_1)).$
Consequently, every $z\in D$ has a pre-image $Z(z)\in B(Z^0,r_1)$. In particular, $|Z(z)|\leq R$. By the change of variables formula, the conditional density of
$(X_{K_\varepsilon},V_{K_\varepsilon})$, given
$\mathcal E_{\varepsilon,h}$ and $(\tau_1,\tau_2)$, satisfies, for
$z\in D$,
\begin{align}
f(z\mid x_0,v_0,\tau_1,\tau_2)
\geq \frac{G_{2d}(Z(z))}{ \left|
\det D_Z\Phi(x_0,v_0,\tau_1,\tau_2;Z(z))
\right|}.
\end{align}
Since $|Z(z)|\leq R$, the standard Gaussian density on
$\mathbb R^{2d}$ satisfies $ G_{2d}(Z(z)) \geq (2\pi)^{-d}e^{-R^2/2}$. 
Moreover, using \eqref{swi_hmc_new_eqn_5.10} and the uniform boundedness of $J^0$, there exists
$\tilde{C}<\infty$, independent of $h$, $(x_0,v_0)$, $\tau_1$, and
$\tau_2$, such that $
| \det D_Z\Phi(x_0,v_0,\tau_1,\tau_2;Z) |
\leq \tilde{C} $
for all \(Z\in B(Z^0,r_1)\). Therefore, $
f(z\mid x_0,v_0,\tau_1,\tau_2) \geq (2\pi)^{-d}e^{-R^2/2}\tilde{C}^{-1} =:\delta>0$, $z\in D$. 
The constant \(\delta\) is independent of \((x_0,v_0)\in\mathcal C\),
\((\tau_1,\tau_2)\in I_1\times I_2\), and \(0<h\le h_0\). Integrating over the event $\mathcal E_{\varepsilon,h}$, whose probability is bounded below by $p_\varepsilon>0$, we obtain, for every measurable $O\subset\mathbb R^{2d}$,
\begin{align}
(\mathcal{P}_h^{\EBE})^{K_\varepsilon}((x,v),O)
\geq p_\varepsilon\,\delta\,\operatorname{Leb}(D\cap O) = \alpha \nu(O),
\end{align}
where $\nu(O):=
\frac{\operatorname{Leb}(D\cap O)}{\operatorname{Leb}(D)}$ and $\alpha:=p_\varepsilon\delta\operatorname{Leb}(D)>0$.

We now extend the preceding set of arguments to the $\mathrm{SEBES}$ scheme. Conditional on a realization of switching variables, the $\EBE$ recursion remains unchanged except the force in $n$-th $\B$ step is $\nabla U(\cdot, \s_{n +1/2})$. Therefore, thanks to our Assumptions~\ref{assump_1_well_posed}-\ref{assump_2_ergo} which are uniform in switching variable,  Claim~1 and Claim~2 hold true uniformly in switching variable. In particular, we can choose $R$, $\delta$ and $D$ independently of switching path. Conditional on $|Z| \leq R$, the corresponding trajectories remain in a compact set $\mathcal{K}_R$, uniformly over $(x_0, v_0) \in \mathcal{C}$, $0 \leq h \leq h_0$ and all switching realizations. Let $\mathcal{K}_{x, R} := \{ x\in \mathbb{R}^{d}\, ;\,  (x,v) \in \mathcal{K}_R\}$.   Due to our assumptions on $\q_{ij}$, there exist constants $\q_* = \inf_{x \in \mathcal{K}_{x, R}} \min_{i\neq j} \q_{ij}(x) > 0$ and $\q^{*} := \sup_{x \in \mathcal{K}_{x, R}} \max_{i \in \mathcal{M}} \q_{i}(x) < \infty$. Fix $\hat{k} \in \mathcal{M}$. If $i =\hat{k}$, the event that no switching jump occurs on $[0, t^{\varepsilon,h}]$ has probability at least $e^{-\q^{*}(\varepsilon + h_0)}$. If $i \neq \hat{k}$, consider the event that the switching chain makes exactly one jump from $i$ to $\hat{k}$ during interval $\varepsilon/2$ and makes no other jump during the whole interval $[0 , t^{\varepsilon, h}]$. Since $\q_{i\hat{k}}(x) \geq \q_*$ on $\mathcal{K}_{x, R}$ and the total jump rate is bounded above by $\q^*$, the probability of the concerned event is lower bounded by $e^{-\q^* t^{\varepsilon, h}}\q_* \varepsilon/2 \geq e^{-\q^*( \varepsilon + h_0)}\q_* \varepsilon/2$. As a consequence, we obtain 
\begin{align}
    \mathbb{P}(\s_{K_\varepsilon} = \hat{k} \mid \s_0 = i) \geq p_s, \quad i \in \mathcal{M},
\end{align}
where $ p_s = e^{-\q^*(\varepsilon + h_0)} \min\{ 1, \q_* \varepsilon/2\} >0$. Therefore, for every measurable $\hat{O} \subset \mathbb{R}^{2d} \times \mathcal{M}$ and every $(x, v, i) \in \mathcal{C} \times \mathcal{M}$, we obtain
\begin{align}
(\mathcal{P}^{\mathrm{SEBES}})^{K_{\varepsilon}} \geq p_\varepsilon \delta p_S\sum_{k\in\mathcal M}
\operatorname{Leb}\big(\{z\in D:(z,k)\in \hat{O}\}\big).
\end{align}
Defining the probability measure
\begin{align}
\nu(\hat{O})
:=
\frac{1}{|\mathcal M|\,\operatorname{Leb}(D)}
\sum_{k\in\mathcal M}
\operatorname{Leb}\big(\{z\in D:(z,k)\in \hat{O}\}\big),
\end{align}
we have
\begin{align}
(\mathcal{P}^{\mathrm{SEBES}}_h)^{K_{\varepsilon}}\big((x,v,i),\hat{O}\big)
\geq
\alpha \nu(\hat{O}),
\end{align}
where  $\alpha := p_\varepsilon\delta p_S\,\mathrm{Leb}(D)|\mathcal M|>0$.  
\end{proof}

\begin{proof}[Proof of Lemma~\ref{lemma_lyapunov_ebe}]
Let $\mathcal{F}$ be the sigma-algebra generated by the Poisson jump process over the entire interval $[0,h]$. Since the $\mathrm{E}$ steps are of size $h/2$, we split the information into two independent sigma-algebras. Let $\mathcal{F}_1$ be generated by the Poisson process on $[0,h/2)$, with $N_1=N_{h/2}=n_1$ and jump times $\{\tau^{(1)}\}$. Let $\mathcal{F}_2$ be generated by the Poisson process on $[h/2,h)$, with $N_2=N_h-N_{h/2}=n_2$ and jump times $\{\tau^{(2)}\}$. Here $\{\tau^{(1)}\} = \{\tau_1^{(1)}, \dots, \tau_{n_{1}}^{(1)}\}$ denotes the jump times in $[0,h/2)$ conditional on $N_{h/2}=n_1$, while $\{\tau^{(2)}\}$ denotes the jump times in $[h/2,h)$ conditional on $N_h-N_{h/2}=n_2$.

For the sake of convenience, we denote the random coefficients for the first and second $\mathrm{E}$ steps conditional on $\mathcal{F}_1$ and $\mathcal{F}_2$:
 $A_1 = \sum_{i=0}^{n_1} \cos^i(\phi)(\tau_{i+1}^{(1)} - \tau_i^{(1)})$ and $B_1 = \cos^{n_1}(\phi)$, and $A_2 = \sum_{j=0}^{n_2} \cos^j(\phi)(\tau_{j+1}^{(2)} - \tau_j^{(2)})$ and $B_2 = \cos^{n_2}(\phi)$. Note that $\cos(\phi)\geq 0$, $0 <A_i \leq h/2$ and $0 \leq B_i \leq 1$ are $\mathcal{F}_i$-measurable, $i=1,2$ random variables.  Recall the intermediate transitions are
 $(x_{1/2}, \hat{v}_{1/2}) = \mathscr{S}^\mathrm{E}_{h/2}(x_0, v_0)$,  $(x_{1/2}, v_{1/2}) = \mathscr{S}^\mathrm{B}_{h}(x_{1/2}, \hat{v}_{1/2})$ and $(x_{1}, v_{1}) = \mathscr{S}^\mathrm{E}_{h/2}(x_{1/2}, v_{1/2})$.
We can write $(x_1, v_1)$ as
\begin{align} 
(x_1, v_1)^{\top} = (x_{1/2} +v_{1/2}A_2, v_{1/2}B_2)^{\top} + J(\xi_1^{(2)}, \xi_2^{(2)})^{\top},
\end{align}
where $J$ is \begin{equation}\label{swi_equation_5.48}
    J = \begin{pmatrix} 
    \sin(\phi) \frac{\sigma_{xv}}{\sqrt{\sigma_{vv}}} I_{d} & \sin(\phi) \sqrt{\sigma_{xx} - \frac{\sigma_{xv}^2}{\sigma_{vv}}} I_{d} \\[8pt]
    \sin(\phi)\sqrt{\sigma_{vv}} I_{d} & \mathbf{0}_{d}
    \end{pmatrix}  = \sin(\phi) \begin{pmatrix}  \frac{\sigma_{xv}}{\sqrt{\sigma_{vv}}} &  \sqrt{\sigma_{xx} - \frac{\sigma_{xv}^2}{\sigma_{vv}}}  \\[8pt]
    \sqrt{\sigma_{vv}} & 0\end{pmatrix} \otimes I_{d}.
\end{equation} 
If $n=0$ then $J = \mathbf{0}_d$. 
It is not difficult to obtain that $\operatorname{Trace}(JJ^\top) = d \sin^2 (\phi) ( \sigma_{xx} + \sigma_{vv})$, and therefore, $\mathbb{E}(\operatorname{Trace}(JJ^{\top})) \leq C h$ where $C$ is independent of $h$.  

Recall $W (x,v)  = 1 + \mathcal{V}(x,v) = 1 + a|x|^2 + b|v|^2 + c( x \cdot v) $ with $ab > c^2/4$.  Conditioned on $\mathcal{F}$ and $(x_{1/2}, v_{1/2})$, we get
\begin{align*}
\mathbb{E}[\mathcal{V}(x_1, v_1) &\mid x_{1/2}, v_{1/2}, \mathcal{F}] = \mathbb{E}[a|x_1|^2 + b|v_1|^2 + c(x_1 \cdot v_1)  \mid x_{1/2}, v_{1/2}, \mathcal{F}] \\  
&=   \mathbb{E}\big[ a | x_{1/2} +v_{1/2} A_2 + J_{11} \xi^{(2)}_1 + J_{12} \xi^{(2)}_2 |^2 + b | v_{1/2}B_2 + J_{21} \xi_{1}^{(2)} |^2 \\
&\quad + c\big((x_{1/2} +v_{1/2} A_2 + J_{11} \xi^{(2)}_1 + J_{12} \xi^{(2)}_2) \cdot (v_{1/2}B_2 + J_{21} \xi_{1}^{(2)})\big) \mid x_{1/2}, v_{1/2}, \mathcal{F} \big] \\
&= a |x_{1/2}|^2 + (aA_2^2 + b B_2^2 + cA_2 B_2) |v_{1/2}|^2 + (2 a A_2 + cB_2) (x_{1/2} \cdot v_{1/2}) + C_2,
\end{align*}
where $C_2$ is an $\mathcal{F}_2$-measurable random variable with all moments bounded and is independent of $\mathcal{F}_1$. Indeed, $C_2$ consists of second moments and cross moments of the Gaussian random vector used in the second $\mathrm{E}$ half-step. All cross terms are bounded by the corresponding second moments, and therefore $\mathbb{E}[C_2]\leq C \mathbb{E}[\operatorname{Trace}(JJ^\top)]\leq C h$.

For brevity, denote $K_1 := (aA_2^2 + bB_2^2+ cA_2 B_2)$ and $K_2 := (2aA_2 + cB_2)$.  We have
\begin{align*}
|x_{1/2}|^2 &= |x_0 + v_0 A_1 + J_{11} \xi_1^{(1)} + J_{12}\xi_2^{(1)}|^2, \quad
|\hat{v}_{1/2}|^2 = |v_0 B_1 + J_{21} \xi_{1}^{(1)}|^2, \\
(x_{1/2} \cdot \hat{v}_{1/2}) &= \big( (x_0 + v_0 A_1 + J_{11} \xi_1^{(1)} + J_{12}\xi_2^{(1)} ) \cdot (v_0 B_1 + J_{21} \xi_{1}^{(1)})\big).
\end{align*}
Also, denote $K_3 = a + L_1K_1 h^2 + L_1\Upsilon^{-1}K_1h - K_2 h \kappa  $. Therefore, applying Young's inequality gives
\begin{align}\label{swi_equation_5.16}
    \mathbb{E}[\mathcal{V}(x_1, v_1) \mid x_{0}, v_0, \mathcal{F}] &\leq  K_3 |x_0|^2   +  \big(K_3 A_1^2 + K_1(1 +  h\Upsilon)B_1^2 + K_2A_1 B_1\big) |v_0|^2  \nonumber \\  &
    +  \big( 2K_3 A_1  + B_1 K_2\big)(x_0 \cdot v_0) + C_1,
\end{align}
where $C_1$ is a generic $\mathcal{F}$ measurable random variable with all moments bounded. Moreover, $\mathbb{E}(C_1) \leq C h$ as $C_1$ is obtained from the Gaussian noise contributions in the two $\mathrm E$ half-steps. We now take expectation in \eqref{swi_equation_5.16} to get
\begin{align*}
\mathbb{E}[\mathcal{V}(x_1,v_1)\mid x_0,v_0]
\leq \alpha|x_0|^2 + \beta|v_0|^2 + \delta(x_0\cdot v_0) + C_0,
\end{align*}
where we have denoted
  $\alpha := \mathbb{E}[K_3]$, $ 
\beta := \mathbb{E}\big[K_3 A_1^2 + K_1(1+h\Upsilon)B_1^2 + K_2 A_1 B_1\big]$, $
\delta := \mathbb{E}\big[2 K_3 A_1 + B_1 K_2\big] $ 
and \(C_0 := \mathbb{E}[C_1]\).

Let us look at the coefficients of $|x_0|^2$, $|v_0|^2$ and $(x_0 \cdot v_0)$ one by one.   To this end, we will be using the facts which follow : we have $\mathbb{E}[B_i] = 1 - \frac{\lambda}{2}(1 - \cos(\phi))h + \mathcal{O}(h^2)$, $i=1,2$, 
$\mathbb{E}[B_1^2 B_2^2] = \exp\left(-\lambda h \sin^2(\phi)\right) = 1 - (\lambda \sin^2(\phi))h + \mathcal{O}(h^2)$ and $\mathbb{E}[B_1 B_2] = \exp\left(-\lambda h (1 - \cos(\phi))\right) = 1 - \lambda(1 - \cos(\phi))h + \mathcal{O}(h^2)$. We can compute for $\phi \neq \pi/2$
\begin{equation}
\mathbb{E}[A_1B_1] = \frac{
e^{-\lambda(1-\cos\phi)h/2} - e^{-\lambda(1-\cos^2\phi)h/2} }{ \lambda\cos\phi(1-\cos\phi)},
\end{equation}
implying $\mathbb{E}[A_1B_1] = h/2 + \mathcal{O}(h^2)$ for sufficiently small $h$. We also have $ \mathbb{E}[A_i]
=
\frac{1-e^{-\lambda(1-\cos\phi)h/2}}
{\lambda(1-\cos\phi)}$. 
Therefore, $
\mathbb{E}(A_i) = \frac{h}{2} + \mathcal{O}(h^2)$ and $ \mathbb{E}(A_i^2) = \mathcal{O}(h^2)$ for $\phi \neq \pi/2$. The same asymptotics can be obtained for $\phi = \pi/2$. Substituting for $K_1$ and $K_2$, we obtain
\begin{align}
\alpha = \mathbb{E}[K_3] = a + L_1h^2\mathbb{E}[K_1] + L_1 \Upsilon^{-1}h\mathbb{E}[K_1]  - \kappa h\mathbb{E}[K_2] = a - (c\kappa  - b L_1 \Upsilon^{-1}) h + \mathcal{O}(h^2).
\end{align}
Next, for the coefficient of \(|v_0|^2\), using $
\mathbb{E}[K_3A_1^2] = \mathcal{O}(h^2)$, and $
\mathbb{E}[K_2A_1B_1]
= \mathbb{E}[(2aA_2+cB_2)A_1B_1] 
= 2a\,\mathbb{E}[A_2]\mathbb{E}[A_1B_1]
+ c\,\mathbb{E}[B_2]\mathbb{E}[A_1B_1] 
= \frac{c}{2}h + \mathcal{O}(h^2)
$, 
we get
\begin{align}
\beta
&= \mathbb{E}\big[K_3 A_1^2 + K_1(1+h\Upsilon)B_1^2 + K_2A_1 B_1\big] \nonumber\\
&= (1+h\Upsilon )\mathbb{E}\big[aA_2^2B_1^2 + bB_2^2B_1^2 + cA_2B_2B_1^2\big] + \mathbb{E}[(2aA_2+cB_2)A_1B_1]\nonumber\\
&= b(1+h\Upsilon)\mathbb{E}[B_1^2B_2^2]
+ c\,\mathbb{E}[A_2B_2B_1^2] +2a\,\mathbb{E}[A_2]\mathbb{E}[A_1B_1]
+ c\,\mathbb{E}[B_2]\mathbb{E}[A_1B_1]   + \mathcal{O}(h^2) \nonumber\\
&= b\left(1+ \Upsilon h\right)\left(1-\lambda\sin^2(\phi)h+ \mathcal{O}(h^2)\right)
+ \frac{c}{2}h + \frac{c}{2}h + \mathcal{O}(h^2) \nonumber\\
&= b - b\left(\lambda\sin^2(\phi)-\Upsilon\right)h
+ ch + \mathcal{O}(h^2).
\end{align} 
Finally, for the coefficient of $(x_0\cdot v_0)$, we use again that $A_1=\mathcal{O}(h)$, $A_2=\mathcal{O}(h)$ to obtain
\begin{align}
\delta
&= \mathbb{E}[2K_3A_1 + B_1K_2] = 2a\,\mathbb{E}[A_1]
+ \mathbb{E}[B_1(2aA_2 + cB_2)] + \mathcal{O}(h^2) \nonumber\\
&= 2a\,\mathbb{E}[A_1]
+ 2a\,\mathbb{E}[B_1]\mathbb{E}[A_2]
+ c\,\mathbb{E}[B_1B_2] + \mathcal{O}(h^2) \nonumber\\
&= ah + ah + c\Bigl(1-\lambda(1-\cos(\phi))h+\mathcal{O}(h^2)\Bigr)
+ \mathcal{O}(h^2) = c + \big(2a-c\lambda(1-\cos(\phi))\big)h + \mathcal{O}(h^2).
\end{align}
Define
\begin{equation} 
r = \sqrt{\frac{L_1}{\kappa}} + \frac{1}{2} \min\left\{ \lambda\sin^2\phi - 2\sqrt{\frac{L_1}{\kappa}}, \; 2\lambda(1-\cos\phi)- \sqrt{\frac{L_1}{\kappa}} \right\}.
\end{equation}
Then $
\sqrt{\frac{L_1}{\kappa}} < r < \lambda\sin^2\phi-\sqrt{\frac{L_1}{\kappa}}$, and $  
r<2\lambda(1-\cos\phi)$. 
Let us choose $ b=1$, $c=r$, $a=\frac{r}{2}\lambda(1-\cos\phi)$. 
With this choice,
\begin{equation*}
\frac{L_1}{\Upsilon}b-\kappa c = \sqrt{\kappa L_1}-\kappa r = -\kappa \left(r-\sqrt{\frac{L_1}{\kappa}}\right)<0.
\end{equation*}
Hence $\alpha = a- \kappa \left(r-\sqrt{\frac{L_1}{\kappa}}\right)h+\mathcal{O}(h^2).
$
Therefore, there exists $\hat{\alpha}>0$ such that for sufficiently small $h$
\begin{equation}
\alpha\leq a-\hat{\alpha}h.
\end{equation}
Similarly, $b(\lambda\sin^2\phi-\Upsilon)-c = \lambda\sin^2\phi-\sqrt{\frac{L_1}{\kappa}}-r>0$. Thus there exists $\hat{\beta}>0$ 
 for sufficiently small $h$, the following holds:
\begin{equation}
\beta\leq 1-\hat{\beta}h.
\end{equation}
Finally, $2a-c\lambda(1-\cos\phi) = r\lambda(1-\cos\phi)-r\lambda(1-\cos\phi) = 0$. Therefore, $\delta=c+\mathcal{O}(h^2)$ implying $|\delta-c|\leq Ch^2 $ for sufficiently small \(h\).
Denoting
\begin{align}
z_0=(x_0,v_0)^\top,
\qquad
\hat{P} =
\begin{pmatrix}
a & c/2 \\
c/2 & b
\end{pmatrix},
\end{align}
we can write $\mathcal{V}(x_0,v_0)=z_0^\top \hat{P}z_0$. Since $ a=\frac{r}{2}\lambda(1-\cos\phi)$, $\frac{c^2}{4b}=\frac{r^2}{4}$,
and $r<2\lambda(1-\cos\phi)$, we obtain
\begin{align}
a =\frac{r}{2}\lambda(1-\cos\phi)>\frac{r^2}{4} = \frac{c^2}{4b}.
\end{align}
Therefore, $\hat{P}\succ0 $. Using the bounds for $\alpha,\beta,\delta$, we get
\begin{equation}
\alpha|x_0|^2+\beta|v_0|^2+\delta(x_0\cdot v_0)
\leq
z_0^\top \hat{P} z_0
-
h z_0^{\top} \hat{Q}_h z_0,\text{ where }
\hat{Q}_h=
\begin{pmatrix}
\hat{\alpha} & -\dfrac{\delta-c}{2h} \\
-\dfrac{\delta-c}{2h} & \hat{\beta}
\end{pmatrix}.
\end{equation}
Since $|\delta-c|\leq Ch^2$, and $\hat{\alpha}, \hat{\beta}>0$ are independent of $h$, we get
$\det( \hat{Q}_h)= \hat{\alpha}\hat{\beta} - \left(\frac{\delta-c}{2h}\right)^2 > \frac{1}{2}\hat{\alpha}\hat{\beta}$ for sufficiently small $h$, thus it follows that $\hat{Q}_h\succ0 $. Since $\hat{P}\succ0$ and $\hat{Q}_h\succ0$, there exists an $\eta>0$ such that
\begin{equation}
\hat{Q}_h\succeq \eta \hat{P}
\end{equation}
for sufficiently small $h$. Hence $\alpha|x_0|^2+\beta|v_0|^2+\delta(x_0\cdot v_0)
\leq (1-\eta h)\mathcal{V}(x_0,v_0)
$. Therefore,
\begin{equation}
\mathbb{E}[\mathcal{V}(x_1,v_1)\mid x_0,v_0]
\leq
(1-\eta h)\mathcal{V}(x_0,v_0)+C_0.
\end{equation}
Using $C_0\le \widetilde C_0h$, we get
\begin{equation}
\mathbb{E}[\mathcal{V}(x_1,v_1)\mid x_0,v_0]
\leq
(1-\eta h)\mathcal{V}(x_0,v_0)+\widetilde C_0 h.
\end{equation}

For the SEBES scheme, the proof is the same as for EBE after conditioning on the randomness generated by the first switching step. Indeed, the \(\mathrm{S}\) steps only update the discrete switching variable and do not directly change $(x,v)$; moreover, the final $\mathrm{S}$ step occurs after \((x_1,v_1)\) has already been computed and hence does not affect \(\mathcal V(x_1,v_1)\). Conditional on the first switching update, the continuous part of SEBES is exactly an EBE step with the force $\nabla U(x_{1/2}\; ;\s_1)$ in the \(\mathrm{B}\) step. Since Assumptions~\ref{assump_1_well_posed}-\ref{assump_2_ergo} hold uniformly over the switching state, the same bounds used in the EBE proof apply uniformly, namely
\begin{equation}
|\nabla U(x;\s_1)|^2\le L_1|x|^2+L_2,
\quad
-(x\cdot\nabla U(x;\s_1))\le -\kappa |x|^2+\ell.
\end{equation}
Hence the same choices of $a,b,c$ and the same estimates for
$\hat\alpha,\hat\beta$ apply uniformly in the switching state. The
positive-definiteness argument for $\hat{P}$ and $\hat{Q}_h$ therefore gives
$\hat{Q}_h\succeq \eta_0 \hat{P}$ for all sufficiently small $h$, with $\eta_0$
independent of the switching state. Consequently,
\begin{equation}
\mathbb E[\mathcal V(x_1,v_1)\mid x_0,v_0,\s_0]
\leq
(1-\eta_0h)\mathcal V(x_0,v_0)+\widetilde C_0h.  
\end{equation}

Thus, the SEBES scheme satisfies the same Lyapunov drift condition as EBE under the same assumptions.

\end{proof}

By applying Harris theorem \cite{hairer2011yet} on fixed time skeleton kernel $\mathcal{Q} = (\mathcal{P}_h)^{K_\varepsilon}$, we obtain the desired result. 

\subsection{Proof of Theorem~\ref{sebes_error_expansion_thrm}}

Consider a measurable function $\varphi$ defined on $\mathbb{R}^{2d} \times \mathcal{M}$ and bounded by $1 +\cV$. Integrating \eqref{dis_Poisson_equation_SEBES} with respect to $\rho$ yields
\begin{align}
  \sum_{i\in\mathcal M} \int_{\mathbb R^{2d}} \left(\frac{\mathcal{P}_h-I}{h}\right)\Psi(x, v,i) \rho(x,v,i) \der x \der v    = \mu(\varphi) - \mu_h(\varphi).
\end{align}
Because $\mathcal{P}_h$ is a positive operator and the Poisson solution satisfies
\begin{align}
|\Psi(x,v,i)|\leq |\Psi|_{L^\infty_\cV}(1+\cV(x,v)),
\end{align}
we may use the Lyapunov drift condition $\mathcal{P}_h(1+\cV)\le c_1 \cV+c_2 $
to bound the pairing against the invariant density on $\mathbb R^{2d}\times\mathcal M$:
\begin{align}
\sum_{i\in\mathcal M}\int_{\mathbb R^{2d}} |\mathcal{P}_h\Psi(x,v,i)|\,\rho(x,v,i)\der x\der v
\nonumber &\leq
|\Psi|_{L^\infty_\cV}
\sum_{i\in\mathcal M}\int_{\mathbb R^{2d}} \big(\mathcal{P}_h(1+\cV)\big)(x,v,i)\,\rho(x,v,i)\der x\der v \nonumber  \\ 
&\leq 
|\Psi|_{L^\infty_\cV}
\sum_{i\in\mathcal M}\int_{\mathbb R^{2d}} \big(c_1\cV(x,v)+c_2\big)\,\rho(x,v,i)\der x\der v
<\infty, \label{swi_eqn_new_5.55}
\end{align}
where the finiteness follows from the exponential decay of $\rho$. The adjoint $\mathcal{P}^*_{h}$ is defined via duality
\begin{align}
   \sum_{i\in\mathcal M}\int_{\mathbb R^{2d}}  f(x,v,i) \mathcal{P}_h^*\rho(x,v,i) \der x \der v = \sum_{i\in\mathcal M}\int_{\mathbb R^{2d}} \mathcal{P}_h f(x,v,i) \rho(x,v,i) \der x \der v, 
\end{align}
where $f $ belongs to a class of bounded and measurable functions. Since each substep of $\mathrm{SEBES}$ maps absolutely continuous measures to absolutely continuous measures, $\mathcal{P}^*\rho$ can be identified as density.   Because of \eqref{swi_eqn_new_5.55}, Fubini's theorem can be used to extend duality for $f = \Psi$, and therefore, we get $\mathcal P^{*}_h \rho \in L^{1, \cV}$ and
\begin{align}\label{swi_eq_adjoint}
  \sum_{i\in\mathcal M} \int_{\mathbb R^{2d}} \Psi(x,v,i) \left(\frac{\mathcal{P}_h^*-I}{h}\right)\rho(x,v,i) \der x \der v 
   =\mu(\varphi) - \mu_h(\varphi).
\end{align}
  For the $\rB$-step, Taylor's theorem along the explicit flow yields
\begin{equation}
    P^{\mathrm{B}}_sf =
f+s\rB f+\frac{s^2}{2}\rB^2f+\frac{s^3}{6}\rB^3f
+\int_0^s\frac{(s-r)^3}{3!}P_r^\rB \mathrm{B}^4f \der r.
\end{equation}
 The function $\rE^j f$, $0<j \leq 4$ is smooth and has at most polynomial growth. It is not difficult to show the moment bounds for $\mathrm{SEBES}$ under Assumptions~\ref{assump_1_well_posed}-\ref{assump_2_ergo} following the similar set of arguments as used in establishing Lyapunov drift condition and therefore, we have 
\begin{equation*}
\sup_{i\in \mathcal{M}}\mathbb{E}\int_0^h |\rE^{j+1}f(X_s,V_s, i)|\der s<\infty, \quad 0<j\leq 4,
\end{equation*}
therefore, we get
\begin{align}
P^{\mathrm{E}}_sf
=
f+ s \mathrm{E}f+\frac{s^2}{2}\mathrm{E}^2f+\frac{s^3}{6}\mathrm{E}^3f + \int_0^{s}\frac{(s-r)^3}{3!} P_r^\mathrm{E} \mathrm{E}^4 f\der r,\quad 0\leq s \leq h.
\end{align}
Denote
\begin{align}
\mathrm{R}_\mathrm{E} f
= \int_0^{h/2}\frac{(h/2-s)^3}{3!}\,P_s^\mathrm{E} \mathrm{E}^4 f\,\der s,
\qquad \mathrm{R}_\mathrm{B} f =\int_0^{h}\frac{(h-s)^3}{3!}\,P_s^\mathrm{B} \mathrm{B}^4 f\,\der s.
\end{align}
For a given $(x,v) \in \mathbb{R}^d$, we treat $ i \to f(x,v, i)$ as $\mathbb{R}^m$ valued function. We apply Taylor's theorem to $P_s^{\rS} := e^{s Q}$ to get
\begin{equation}
P_{s}^\rS f = f+ s\rS f+\frac{s^2}{2}\rS^2 f+\frac{s^3}{6}\rS^3 f+\mathrm{R}_\rS f,
\end{equation}
with $\rS := Q(x)$ and $ \mathrm{R}_\rS f
=
\int_0^{s}\frac{(s-r)^3}{3!}P_r^\rS \rS^4 f \der r$. Consider the symmetric splitting
\begin{equation}
\mathcal{P}_h :=
P_{h/2}^\rS P_{h/2}^\rE P_h^\rB P_{h/2}^\rE P_{h/2}^\rS. 
\end{equation}
Denoting $P_{s}^\mathrm{F}= \mathrm{T}_\mathrm{F}+ \mathrm{R}_\mathrm{F}$ where $\mathrm{T}_{\mathrm{F}} := \mathrm{I} + s\mathrm{F} + \frac{s^2}{2}\mathrm{F}^2 + \frac{s^3}{6}\mathrm{F}^3$ with $\mathrm{F} \in \{\rE, \rB, \rS\}$ and $s = h$ if $\mathrm{F} = \rB$ and $s = h/2$ otherwise. Then
\begin{align*}
\mathcal{P}_h = (\mathrm{T}_\rS+ \mathrm{R}_\rS)(\mathrm{T}_\rE+ \mathrm{R}_\rE)(\mathrm{T}_\rB+ \mathrm{R}_\rB)(\mathrm{T}_\rE+ \mathrm{R}_\rE)(\mathrm{T}_\rS+ \mathrm{R}_\rS).
\end{align*}
Expanding the product and collecting the terms with no remainder factor, we obtain
\begin{equation}
\rT_\rS \rT_\rE \rT_\rB \rT_\rE \rT_\rS 
=
I+h\mathcal{A} +\frac{h^2}{2}\mathcal A^2
+h^3\Bigl(\frac16\mathcal A^3+\mathcal S\Bigr)+ \mathrm{R}_1,
\end{equation}
where $ \mathcal A=\rE+\rB + \rS$ and $\mathrm{R}_1$ contains all terms of order at least $h^4$ arising from 
$\rT_\rS \rT_\rE \rT_\rB \rT_\rE \rT_\rS$. All the terms containing at least one of 
$\rR_\rS,\rR_\rE,\rR_\rB$ are grouped into $\rR_2$. Consequently, we have the following expansion for $f \in C_c^{\infty}(\mathbb{R}^{2d}\times \mathcal{M})$:
\begin{align}\label{eq:SEBES-expansion}
\mathcal{P}_h f = f + h\mathcal{A} f + \frac{h^2}{2}\mathcal{A}^2 f
+h^3\Big(\frac{1}{6}\mathcal{A}^{3}+ \mathcal{S}\Big)f
+\rR_1 f+  \rR_2 f.
\end{align}
where $\mathcal{S} = \frac{1}{2}[\rB,[\rB,\rE]]
-\frac1{24}[\rE,[\rE,\rB]] +\frac1{12}[\rB+\rE,[\rB+\rE,\rS]] -\frac{1}{24}[\rS,[\rS,\rB+\rE]]$. 
We now pair the above expansion against the Gibbs density $\rho$ in order to derive the corresponding adjoint expansion. Since $f\in C_c^\infty(\mathbb R^{2d}\times\mathcal M)$, all terms produced by the operators under consideration are smooth and have at most polynomial growth. Together with the fact that $\rho$ is smooth and its derivatives decay sufficiently fast, this implies that all duality pairings and remainder integrals below are absolutely integrable. Consequently, Fubini's theorem may be applied to interchange the order of integration, and integration by parts yields no boundary contributions (due to sufficiently fast decay of $\rho$ and its derivatives, see Assumption~\ref{swi_decay_assump}). The weighted spaces $L^{1,\cV}$ and $L_\cV^\infty$ are used only for the subsequent remainder estimates and final duality argument. With these preliminaries in
place, the adjoints can be obtained. For terms like $\rT_\rB$ it is straightforward to obtain its adjoint which we write below
\begin{align}
    \rT_\rB^* = I+h\rB^*+\frac{h^2}{2}(\rB^*)^2+\frac{h^3}{6}(\rB^*)^3.
\end{align}
Likewise, we can obtain adjoints for $\rT_\rE$, $\rT_\rS$ and $\rR_1$ via repeated application of integration by parts and Fubini's theorem. Likewise, for reminder terms, for example:
\begin{align}
\rR_\rS^*\rho
=
\int_0^{h/2}\frac{(h/2-s)^3}{3!}(\rS^*)^4 (P_s^\rS)^*\rho \der s.
\end{align}
It is more involved to get adjoint of one of the terms appearing in $\rR_2$, for example
\begin{align*}
(\rR_\rS \rR_\rE \rR_\rB \rT_\rE \rT_\rS)f = \int_0^{h/2}\int_0^{h/2}\int_0^h
 q(s,r,u)
P_s^\rS \rS^4 P_r^\rE \rE^4 P_u^\rB \rB^4 \rT_\rE \rT_\rS f
\der u \der r \der s,
\end{align*}
where $q(s , r, u) =(h/2-s)^3(h/2-r)^3(h-u)^3/(3!)^3$.  
Hence, by reversing the operator order, its adjoint is
\begin{align*}
(\rR_\rS \rR_\rE \rR_\rB \rT_\rE \rT_\rS)^*\rho
&=
\int_0^{h/2}\int_0^h\int_0^{h/2}
q(u,r,s)\rT_\rS^* \rT_\rE^* (\rB^*)^4 (P_r^\rB)^* (\rE^*)^4 (P_s^\rE)^* (\rS^*)^4 (P_u^\rS)^*\rho
\der s\der r\der u.
\end{align*}
All remaining adjoint terms are obtained in the same manner. Therefore, pairing \eqref{eq:SEBES-expansion} against $\rho$, we obtain
\begin{align}
\sum_{i\in\mathcal M}&\int_{\mathbb R^{2d}} (\mathcal P_h f)(x,v,i)\,\rho(x,v,i)\der x\der v
=
\sum_{i\in\mathcal M}\int_{\mathbb R^{2d}} f(x,v,i)\rho(x,v,i)\der x\der v \nonumber\\
&
+ h\sum_{i\in\mathcal M}\int_{\mathbb R^{2d}} f_i\mathcal{A}^*\rho_i\der x\der v
+\frac{h^2}{2}\sum_{i\in\mathcal M}\int_{\mathbb R^{2d}} f_i(\mathcal{A}^*)^2\rho_i\der x\der v 
+h^3\sum_{i\in\mathcal M}\int_{\mathbb R^{2d}} f_i\Big(\frac{1}{6}(\mathcal{A}^*)^3+\mathcal{S}^*\Big)\rho_i\der x\der v \nonumber\\
&
+\sum_{i\in\mathcal M}\int_{\mathbb R^{2d}} f_i\rR_1^*\rho_i\der x\der v
+\sum_{i\in\mathcal M}\int_{\mathbb R^{2d}} f_i\rR_2^*\rho_i\der x\der v,
\label{swi_eq_SEBES_adjoint_pairing}
\end{align}
where $\mathcal A^*=\rS^*+\rE^*+\rB^*$, $\mathcal{S}^* = \frac1{12}[\rB^*,[\rB^*,\rE^*]]
-\frac1{24}[\rE^*,[\rE^*,\rB^*]]
+\frac{1}{12}[\rB^*+\rE^*,[\rB^*+\rE^*,\rS^*]]
-\frac1{24}[\rS^*,[\rS^*,\rB^*+\rE^*]]$ and, for convenience, we have denoted $f_i := f(x,v, i)$ and $\rho_i := \rho(x,v,i)$. 
Thanks to the invariance of Gibbs
measure for exact dynamics $(\mathcal{A}^{*})^j \rho = 0$ under condition \eqref{swi_eqn_2.9}, \eqref{swi_eq_SEBES_adjoint_pairing} becomes
\begin{align}
\sum_{i\in\mathcal M}&\int_{\mathbb R^{2d}} (\mathcal{P}_h f)(x,v,i)\,\rho(x,v,i)\der x\der v
=
\sum_{i\in\mathcal M}\int_{\mathbb R^{2d}} f(x,v, i)\rho(x,v, i)\der x\der v \nonumber\\
&
+h^3\sum_{i\in\mathcal M}\int_{\mathbb R^{2d}} f_i\mathcal{S}^*\rho_i\der x\der v 
+\sum_{i\in\mathcal M}\int_{\mathbb R^{2d}} f_i\rR_1^*\rho_i\der x\der v
+\sum_{i\in\mathcal M}\int_{\mathbb R^{2d}} f_i\rR_2^*\rho_i\der x\der v. \label{swi_eqn_5.62}
\end{align}
Since the density $\rho$ has sufficiently fast decaying derivatives, applying any finite combination of the adjoint operators
$\rS^*,\rE^*,\rB^*$ produces functions which retain this decay. Since $\cV$ is quadratic, these functions belong to $L^{1,\cV}$. By passing the weighted norm inside the exact time-integral representations of the remainder terms $\rR_i^* \rho$, the bound below follows:
\begin{equation*}
|\rR_i^*\rho|_{L^{1,\cV}}\le Ch^4,
\qquad i=1,2.
\end{equation*}
And, similarly $\mathcal{S}^*\rho\in L^{1,\cV}$. To construct a pairing with the Poisson solution $\Psi$, we approximate $\Psi$ with compactly supported functions. To this end, let $\chi_n\in C_c^\infty(\mathbb R^{2d})$ satisfy
$0\le \chi_n\le 1$ and $\chi_n\to 1$ point-wise as $n \to \infty$, and define
\begin{equation}
g_n(x,v,i):=\chi_n(x,v) \Psi(x,v,i).
\end{equation}
As our Lyapunov function is locally integrable and since $|\Psi|\le C(1+\cV)$ for all $i\in \mathcal{M}$, we have $g_n\in L^1(\mathbb{R}^{2d}\times \mathcal{M})$. 
Let $\eta\in C_c^\infty(\mathbb R^{2d})$ be a standard mollifier with unit mass, and define $
\eta_\delta(z):=\delta^{-2d}\eta(z/\delta)$. For each $n$, choose $\delta_n>0$  such  that for every $i \in \mathcal{M}$ (see \cite[Theorem~4.22]{brezis2011functional}), we have 
\begin{equation}\label{swi_eqn_5.64}
|\eta_{\delta_n}*g_n-g_n|_{L^1(\mathbb R^{2d})}<2^{-n},
\end{equation}
Define $ f_n(x,v,i):=(\eta_{\delta_n}*g_n(\cdot, \cdot, i))(x,v)\in C_c^\infty(\mathbb R^{2d}\times \mathcal{M}).$
Then, due to Tonelli's theorem, \eqref{swi_eqn_5.64} and finiteness of $\mathcal{M}$, we get
\begin{align}
\int_{\mathbb{R}^{2d}}\sum_{n=1}^\infty \sum_{i \in \mathcal{M}} |f_n-g_n| \der y = \sum_{n=1}^{\infty}\sum_{i \in \mathcal{M}} |f_n - g_n|_{L^1(\mathbb R^{2d})} < \infty,
\end{align}
hence, $ f_n(x,v,i)-g_n(x,v,i)\to0$ for a.e.  $(x,v) \in \mathbb{R}^{2d}$ and for every $i \in \mathcal{M}$. Since $g_n=\chi_n\Psi\to\Psi$ pointwise, we have established
\begin{equation}\label{eq_SEBES_fn_to_Psi}
f_n(x,v,i)\to \Psi(x,v,i)
\qquad\text{for a.e. } (x,v) \in \mathbb R^{2d} \text{ and every } i\in \mathcal{M}.
\end{equation}
We next show that $(f_n)$ is uniformly bounded by $1+\cV$. 
Since $\Psi\in L^\infty_\cV$, $|\Psi(y,i)|\leq |\Psi|_{L^\infty_\cV}(1+\cV(y))$.  Hence, for $z=(x,v)$,
\begin{align*}
|f_n(z, i)| &= \left| \int_{\mathbb{R}^{2d}}\eta_{\delta_n}(z-y)\chi_n(y)\Psi(y, i)\der y
\right| \leq
|\Psi|_{L^\infty_\cV}
\int_{\mathbb R^{2d}}\eta_{\delta_n}(z-y)(1+\cV(y))\der y.
\end{align*}
Since $\eta_{\delta_n}$ is compactly supported and $\delta_n\leq 1$, whenever
$\eta_{\delta_n}(z-y)\neq 0$ one has $|z-y|\le1$. For quadratic $\cV$, it is not difficult to show that there exists $C>0$ such that $1+\cV(y)\le C(1+\cV(z))$ whenever $|z-y|\leq 1$.
Therefore, $ |f_n(z,i)|\le C |\Psi|_{L^\infty_\cV}(1+\cV(z))$. Thus for every $\varrho\in L^{1,\cV}$,
\begin{equation}
|f_n\varrho| \leq C |\Psi|_{L^\infty_\cV}(1+\cV)|\varrho|
\in L^1(\mathbb{R}^{2d}),
\end{equation}
Therefore, using \eqref{eq_SEBES_fn_to_Psi} and dominated convergence theorem, we obtain for every $\varrho \in L^{1, \cV}$
\begin{equation}\label{eq:SEBES-pairing-limit}
\sum_{i \in \mathcal{M}}\int_{\mathbb R^{2d}} f_n (x,v,i) \varrho(x,v,i)\der x \der v 
\longrightarrow
\sum_{i \in \mathcal{M}} \int_{\mathbb R^{2d}} \Psi(x,v,i) \varrho(x,v,i) \der x \der v,
\end{equation}
as $n \to \infty$. As we already discussed that functions $
\varrho=\frac{P_h^*-I}{h}\rho_i$, $\varrho=\mathcal S^*\rho_i$ or $\varrho=R_j^*\rho_i$, $j =1, 2$ belong to $L^{1,\cV}$, therefore  applying \eqref{eq:SEBES-pairing-limit}, we may pass to the limit in the expansion \eqref{swi_eqn_5.62} and conclude that
\begin{align}
\sum_{i \in \mathcal{M}}\int_{\mathbb R^{2d}} \Psi(x,v,i) \left(\frac{\mathcal{P}_h^*-I}{h}\right)\rho(x,v,i)\der x\der v &= h^2\sum_{i \in \mathcal{M}}\int_{\mathbb R^{2d}} \Psi(x,v,i)\mathcal{S}^*\rho(x,v,i)\der x\der v \nonumber  \\  &
+\frac{1}{h}\sum_{i \in \mathcal{M}} \sum_{j=1,2}\int_{\mathbb R^{2d}} \Psi(x,v,i) \rR_j^*\rho(x,v,i)\der x\der v.
\label{eq_SEBES_strong_pairing}
\end{align}
Finally, since $|R_j^*\rho|_{L^{1,\cV}}\le Ch^4$, $j=1,2$ and $\Psi\in L^\infty_\cV$, then using \eqref{swi_eq_adjoint}, we get
\begin{equation}
\mu(\varphi) - \mu_h(\varphi)
=
h^2\sum_{i\in \mathcal{M}}\int_{\mathbb{R}^{2d}} \Psi(x,v,i) \mathcal{S}^*\rho (x,v,i)\der x\der v + \mathcal{O}(h^3).
\end{equation}
This completes the proof. 

\section{Numerical illustration}
In this section, we present a numerical experiment validating the second order convergence of SEBES scheme. With $\hat{\varphi}_{N}$, we denote the Monte Carlo approximation of $\mathbb{E}\varphi(X_{N})$,  i.e. $ \hat{\varphi}_{N} = \frac{1}{M}\sum_{i=1}^{M} \varphi(X^{(i)}_{N})$ where $X^{(i)}_{N}$ represents
 independent samples of the numerical solution.  We borrow this experiment from \cite{tretyakov2025sampling} to compare with switching Langevin algorithm. We consider a mixture of two-dimensional Gaussian distributions:
\begin{align*}
    \rho(x)
    =     \frac{1}{\mathcal{Z}}
    \left[\alpha_1 \exp\left(
    -\frac{1}{2}(x-m_1)^{T}\Sigma_1^{-1}(x-m_1)
    \right) + \alpha_2 \exp\left(
    -\frac{1}{2}(x-m_2)^{T}\Sigma_2^{-1}(x-m_2) \right)
    \right],
    \quad x \in \mathbb{R}^2 ,
\end{align*}
where $\mathcal{Z}$ denotes the normalizing constant, $\alpha_1 = 0.7$, $\alpha_2 = 0.5$, $ m_1 =  (1, 1)$, $m_2 = (-2, -1)$, 
and
\begin{align}
    \Sigma_1  &=
    \begin{pmatrix}
    2.0 & 0.1 \\
    0.1 & 0.5
    \end{pmatrix}, &
    \Sigma_2 &=
    \begin{pmatrix}
    1.0 & -0.1 \\ -0.1 & 1.0
    \end{pmatrix}.
\end{align}
In this experiment, we take $T = 200$ for both SEBES and switching Langevin algorithm. For SEBES, we choose $\lambda = 1$ and $\phi = \pi/4$.

\begin{table}[H]
\centering
\caption{Comparison of errors for SEBES \eqref{swi_hmc_sebes} and switching Langevin algorithm \eqref{swi_lang_algo_1}-\eqref{swi_lang_algo_2} with varying step sizes $h$. Number of trajectories for Monte Carlo estimation are $10^8$ for SEBES and $10^6$ for switching Langevin algorithm.}
\label{swi_hmc_experiment_1}
\begin{tabular}{l c c c c}
\toprule
 & \multicolumn{2}{c}{SEBES} & \multicolumn{2}{c}{Switching Langevin algo.} \\
\cmidrule(lr){2-3} \cmidrule(lr){4-5}
$h$ & weak error & MC error ($\times 10^{-4}$) & weak error & MC error ($\times 10^{-3}$) \\
\midrule
0.90 & 0.23046 & 8.72 & 0.60131 & 10.5 \\
0.81 & 0.18721 & 8.81 & 0.51139 & 10.4 \\
0.72 & 0.14758 & 8.90 & 0.43984 & 10.2 \\
0.64 & 0.11721 & 8.97 & 0.37165 & 10.1 \\
0.60 & 0.10322 & 9.00 & 0.33447 & 10.0 \\
0.57 & 0.09280 & 9.02 & 0.32556 &  9.98 \\
\bottomrule
\end{tabular}
\end{table}

\begin{figure}[H]
    \centering
    \includegraphics[width=0.5\linewidth]{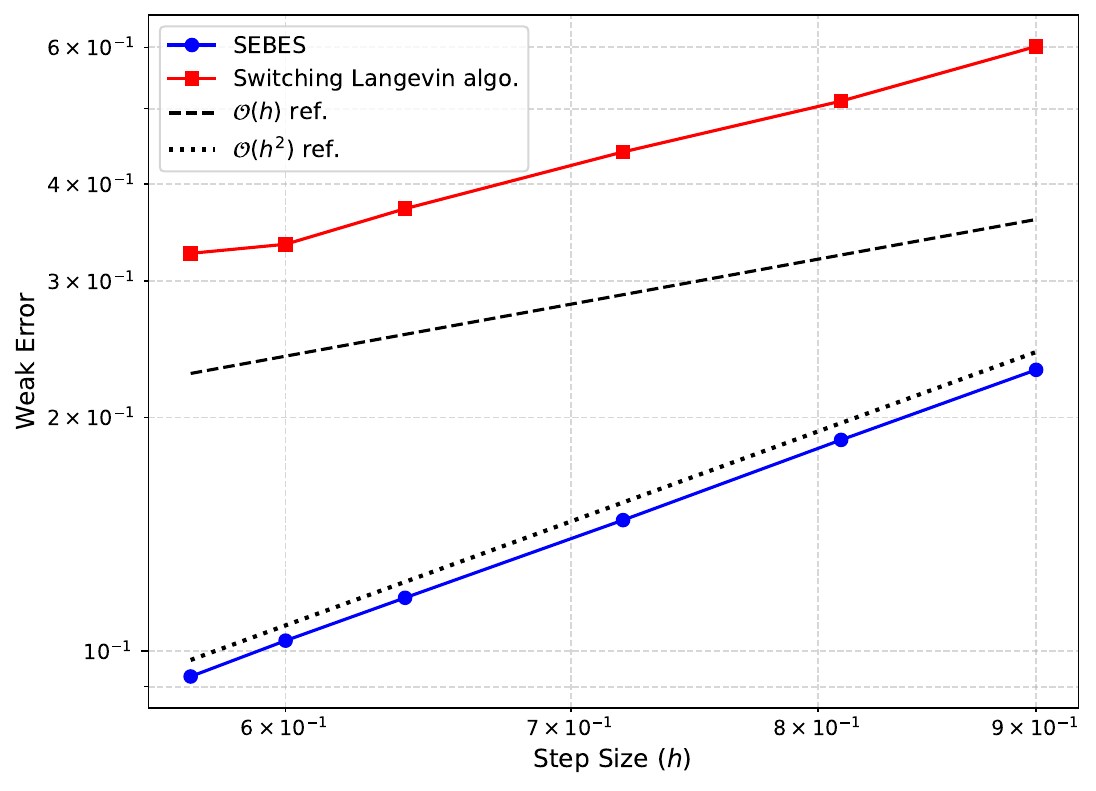}
    \caption{Convergence plot comparing SEBES and switching Langevin algorithm.}
    \label{2d_gaussian_comparison_exp}
\end{figure}

\section{Concluding remarks}
In this paper, we have considered splitting discretizations of continuous time Hamiltonian dynamics interlaced with Poisson refreshment and interacting with a regime-switching chain. We have proved second-order convergence in computing ergodic averages with respect to mixture probability distributions.  Since we do not have access to derivative bounds on the solution of Kolmogorov equations or Poisson equations which are first order partial integral differential equations in our case, we have developed a new approach based on discrete Poisson equation linked with numerical integrators. This methodology allows for larger class of test functions and therefore also providing convergence in total variation and $1$-Wasserstein distances. 

This work opens up several interesting avenues of future research: 
\begin{itemize}
    \item [(i)]  An important line of future work is to utilize the discrete Poisson equation based error analysis, presented in this paper, to obtain non-asymptotic bounds for numerical integrators for Langevin dynamics, randomized HMC and switching HMC under the assumption of strong convexity on potential function $U$ ($U_i$ in mixture distribution setting). 
    \item[(ii)] Another significant step for the switching HMC is to better understand the behavior of switching mechanism, perhaps via coupling techniques, so that an optimal choice of switching rates can be made. 
    \item[(iii)] There have been a lot of focus in developing and analyzing the numerical integrators for kinetic Langevin dynamics. Therefore, development of numerical integrators for randomized HMC is also an interesting topic of future work considering both are stochastic generalizations of deterministic Hamiltonian dynamics with significant differences in path-wise properties but serving the same purpose. 
\end{itemize}

\small
\bibliography{reference}
\bibliographystyle{alpha}
\end{document}